\documentclass{article}

\usepackage[utf8]{inputenc} 
\usepackage[T1]{fontenc}    
\usepackage{url}            
\usepackage{booktabs}       
\usepackage{amsfonts}       
\usepackage{nicefrac}       
\usepackage{microtype}      
\usepackage{lipsum}		
\usepackage{graphicx}
\usepackage{natbib}
\usepackage{doi}

\usepackage{amsthm,amsmath,amssymb}
\usepackage{enumerate}
\usepackage{etoolbox}
\usepackage{mathtools}
\usepackage[table]{xcolor}
\usepackage{tcolorbox}
\usepackage{tikz-cd}
\usepackage{changepage}
\usepackage{algpseudocodex}

\usepackage[nameinlink]{cleveref}
\creflabelformat{equation}{#2\textup{#1}#3}

\newcommand{\Z}{\mathbb{Z}}

\newcommand{\C}{\mathbb{C}}

\newcommand{\Q}{\mathbb{Q}}
\newcommand{\Qbar}{\overline{\Q}}
\newcommand{\R}{\mathbb{R}}

\newcommand{\I}{\mathbb{I}}
\newcommand{\V}{\mathbb{V}}

\newcommand{\innerproduct}[2]{\langle #1, #2 \rangle}
\renewcommand{\bold}{\boldsymbol}

\DeclareMathOperator{\Rank}{Rank}

\DeclareMathOperator{\Span}{Span}

\newtheorem{theoremcount}{Theoremcount}[section]
\newtheorem{conjcount}{Conjcount}

\theoremstyle{plain}
\newtheorem{theorem}[theoremcount]{Theorem}
\newtheorem*{theorem*}{Theorem}
\newtheorem{lemma}[theoremcount]{Lemma}
\newtheorem{proposition}[theoremcount]{Proposition}
\newtheorem{corollary}[theoremcount]{Corollary}
\newtheorem{conjecture}[conjcount]{Conjecture}

\theoremstyle{definition}

\newtheorem{example}[theoremcount]{Example}

\crefname{theoremcount}{Theorem}{Theorems}
\crefname{conjcount}{Conjecture}{Conjectures}
\crefname{conj1count}{Conjecture}{Conjectures}
\crefname{quescount}{Question}{Questions}
\crefname{theorem}{Theorem}{Theorems}
\crefname{lemma}{Lemma}{Lemmas}
\crefname{proposition}{Proposition}{Propositions}
\crefname{corollary}{Corollary}{Corollaries}
\crefname{definition}{Definition}{Definitions}
\crefname{example}{Example}{Examples}
\crefname{remark}{Remark}{Remarks}
\crefname{conjecture}{Conjecture}{Conjectures}
\crefname{question}{Question}{Questions}

\date{March 7, 2026}

\author{ \href{https://orcid.org/0009-0001-0062-648X}{\includegraphics[scale=0.06]{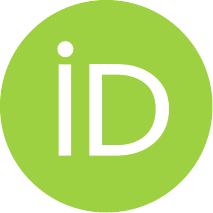}}\hspace{1mm}Igor V.~Loo \thanks{School of Physical and Mathematical Sciences, Nanyang Technological University, Singapore, email:
\texttt{IGOR002@e.ntu.edu.sg}.
The work of Igor Van Loo was supported by an NTU Provost Graduate Award, whose support is gratefully acknowledged.} \hspace{2mm} and 
\href{https://orcid.org/0000-0003-3141-3118}{\includegraphics[scale=0.06]{orcid.pdf}}\hspace{1mm}Fr\'ed\'erique Oggier
}

\hypersetup{
pdftitle={On the Existence of Algebraic Equiangular Lines},
pdfsubject={q-bio.NC, q-bio.QM},
pdfauthor={Igor V.~Loo, Frederique E.~Oggier},
pdfkeywords={SIC-POVMs; Equiangular Lines; Hilbert's Nullstellensatz; Real Algebraic Geometry},
}

\begin{document}
\title{On the Existence of Algebraic Equiangular Lines}

\maketitle

\begin{abstract}
We consider real and complex equiangular lines, generated by unit vectors. We show that, for an arbitrary dimension $d$, if there exists a set of $d^2$ equiangular unit vectors in $\C^d$, then there must exist a set of $d^2$ equiangular unit vectors with all of their coefficients in a number field. This result is motivated by the question of constructing SIC-POVMs in quantum physics and conjectures around them. We discuss applications of our techniques to the case of real equiangular lines and consequences of the above results.
\end{abstract}

{\small
{\bf Keywords}: SIC-POVMs; Equiangular Lines; Hilbert's Nullstellensatz; Real Algebraic Geometry.}

%
%
\section{Introduction}

A set of complex or real lines passing through the origin is called equiangular if the angle between any two lines is constant. The case of a complex or real orthonormal basis shows the existence of $d$ equiangular lines in $\C^d$, respectively in $\R^d$. We may thus ask about the maximal number of equiangular lines in a given dimension $d$.

For the complex case, it is known \cite{Godsil_2009_equiangular_proof} that the size of a set of equiangular lines in $\C^d$ cannot exceed $d^2$.
In the quantum physics literature, a maximal set of $d^2$ equiangular lines in $\C^d$ is the underlying
mathematical object defining a symmetric informationally complete positive-operator-valued measure (SIC-POVM). It is conjectured that the bound of $d^2$ equiangular lines is achievable for every dimension $d\geq 2$, as first proposed in 1999 by Gerhard Zauner in his PhD thesis \cite{zauner_2011}, and since then numerous works have provided numerical and exact constructions further strengthening the conjecture (see e.g., \cite{Renes_2004} \cite{scott_grassl_2009} \cite{scott2017sicsextendinglistsolutions} \cite{Grassl_2017_FibLucas} \cite{Appleby_2022_StarkUnit} \cite{bengtsson2024sicpovmsstarkunitsdimensions}). We refer the reader to \cite{Fuchs_2017} for an excellent overview on this topic and to \cite{appleby2025constructiveapproachzaunersconjecture} for the current research frontier.

Many of these constructions have rather mysterious connections with algebraic number theory (see e.g., \cite{Appleby_2017} \cite{appleby2012galoisautomorphismssymmetricmeasurement} \cite{Bengtsson_2017}) and geometry, leading to a number of conjectures, such as (see \Cref{eq:normalizedoverlap} for the definition of normalized overlap, and \Cref{eq:wh_fid_cond} for that of covariance under the action of the Weyl-Heisenberg group):

\begin{conjecture}\cite[p.~7]{appleby2025constructiveapproachzaunersconjecture}\label{conj:overlap}
The normalized overlaps $ e^{i\theta_{jl}}$ are algebraic units.
\end{conjecture}

\begin{conjecture}\cite[p.~5]{appleby2025constructiveapproachzaunersconjecture} \label{conj:wh}
Numerical evidence suggests that the polynomial equations defining a SIC-POVM which is covariant under the action of the Weyl-Heisenberg group define an algebraic variety of dimension zero for $d > 3$.    
\end{conjecture}

Our main result in the complex case proves, loosely speaking, that the  existence of $d^2$ equiangular lines in $\C^d$ implies the existence of $d^2$ equiangular algebraic lines (see \Cref{thm:existance_of_real_alg_lines_in_Cd} for the precise statement), providing a formal motivation for the study of complex equiangular lines using algebraic number theory for all dimensions.

For the real case, it is known \cite[Theorem 3.5]{LemmensSeidel} that the size of a set of equiangular lines in $\R^d$ cannot exceed $\tfrac{d(d+1)}{2}$, though unlike in the complex case, it has been shown that the bound cannot always be attained (for example for $d = 4$, the maximal set of real equiangular lines has size $6$ opposed to $\frac{4\cdot 5}{2} = 10$). We will illustrate how the methodology developed for the complex case applies to the real case, in the process highlighting differences between the complex and real equiangular lines.

We start by precisely defining the problem of constructing equiangular lines in \Cref{sec:polynomialequations}, where we reformulate it as solving systems of polynomial equations. This formulation allows us to invoke results from real algebraic geometry in \Cref{sec:algebraic} leading to
a proof that the existence of complex equiangular lines implies the existence of algebraic equiangular lines, while we make explicit results from complex algebraic geometry that imply that an algebraic variety of dimension zero implies algebraic solutions.
Consequences of these results, in particular pertaining to \Cref{conj:overlap} and \Cref{conj:wh}, are discussed in \Cref{sec:consequences}.

%
%
%
\section{Reformulating Angle Constraints as a System of Polynomial Equations}
\label{sec:polynomialequations}

In the complex case, for two vectors $\bold{u} = (u_1, u_2, \cdots, u_d)^T$ and $\bold{v} = (v_1, v_2, \cdots ,v_d)^T$ in $\C^d$, the angle between them is by definition
 \begin{equation}  \label{eq:2.4}
        \arccos\left(\frac{|\innerproduct{\bold{u}}{\bold{v}}|}{||\bold{u}|| ||\bold{v}||}\right)
    \end{equation}
where 
 $\innerproduct{\bold{u}}{\bold{v}} = \sum_{j = 1}^d u_j\overline{v_j}$, $\overline{v_j}$ denotes the complex conjugate of $v_j$ and 
$||\bold{u}|| = \sqrt{\innerproduct{\bold{u}}{\bold{u}}}$.

A set of lines $\{l_1, l_2, \cdots , l_n\}$ which pass through the origin of $\C^d$ is called equiangular if the angle between any two lines is constant. Every line $l_j$ is spanned by a unit vector $\bold{u}_j$, therefore the angle between two lines $l_j,l_l$ is simplified to be $\arccos(|\innerproduct{\bold{u}_j}{\bold{u}_l}|) = c$ by \Cref{eq:2.4}. If we let $|\innerproduct{\bold{u}_j}{\bold{u}_l}| = \cos(c) = \alpha < 1$, an equivalent definition for a set of equiangular lines is a set $\{\bold{u}_1,\ldots,\bold{u}_n\}$ of unit vectors in $\C^d$ satisfying
    \begin{equation*} 
        |\innerproduct{\bold{u}_j}{ \bold{u}_l}| =   
    \begin{cases}
        1, \quad j = l \\
        \alpha, \quad j \neq l.
    \end{cases}
    \end{equation*}

It is known that the size of a set of equiangular lines cannot exceed $d^2$ in $\C^d$ \cite{Godsil_2009_equiangular_proof} and that a set $\{\bold{u}_1, \bold{u}_2, \cdots, \bold{u}_{d^2}\}$ of unit vectors in $\C^d$ represents a set of equiangular lines if and only if 
\begin{equation}\label{eq:condition_for_maximal_set_of_equiangular_lines}
       |\innerproduct{\bold{u}_j}{\bold{u}_l}|^2 = \begin{cases}
           1, \quad j = l \\
           \frac{1}{d + 1} \quad j \neq l.
       \end{cases}
    \end{equation}
When there is no confusion and by abuse of language, we may refer to such a maximal set of complex equiangular lines as a SIC-POVM. Expressions such as algebraic SIC-POVM or algebraic equiangular lines are also abuse of language, and short expressions to say that the underlying unit vectors generating the equiangular lines have coefficients all of which are algebraic.

The real case is similarly defined as in the complex case, that is a set of lines $ \{l_1, l_2, \cdots , l_n\}$ which pass through the origin of $\R^d$ is called equiangular if the angle between any two lines is constant. Every line $l_j$ is spanned by a unit vector $\bold{u}_j$, and a set of real lines is equiangular if there exists an $\alpha \in \R$ such that 
\begin{equation}
\label{eq:real_equiangular_line_condition}
    |\innerproduct{\bold{u}_j}{ \bold{u}_l}| =   
    \begin{cases}
        1, \quad j = l \\
        \alpha, \quad j \neq l
    \end{cases}
\end{equation}
where the inner product is the usual Euclidean inner product $\innerproduct{x}{y} = x^Ty$.


\subsection{Basic polynomial equations for the complex case}

We rewrite the necessary and sufficient condition defined by \Cref{eq:condition_for_maximal_set_of_equiangular_lines} as a polynomial system.

Set $\bold{u}_j = (u_{j1}, \ldots, u_{jd})^T$ for $j = 1, \ldots, d^2$ and let $A_{j, k} = \Re(u_{j, k}),~B_{j, k} = \Im(u_{j, k})$ be respectively the real and imaginary parts of $u_{jk}$,  for $k = 1, \ldots, d$, that is
\begin{equation*}
    \bold{u}_j = \begin{pmatrix}
        u_{j1} \\
        \vdots \\
        u_{jd}
    \end{pmatrix} = \begin{pmatrix}
        A_{j ,1} \\
        \vdots \\
        A_{j ,d}
    \end{pmatrix} + i \begin{pmatrix}
        B_{j ,1} \\
        \vdots \\
        B_{j ,d}
    \end{pmatrix}.
\end{equation*}

We rewrite the inner product as
\begin{align*}
\begin{split}
    \innerproduct{\bold{u}_j}{\bold{u}_l} 
    &= \sum_{k = 1}^d (A_{j,k}+iB_{j,k})(A_{l,k}-iB_{l,k}) \\
    &= \sum_{k = 1}^d (A_{j ,k}A_{l ,k} + B_{j, k}B_{l, k}) + i \sum_{k = 1}^d (B_{j ,k}A_{l ,k} - A_{j ,k}B_{l, k}) \\ 
    &= \bold{C}_{j, l} + i\bold{D}_{j, l}.
\end{split}
\end{align*}

This implies that
\begin{equation*}
    |\innerproduct{\bold{u}_j}{\bold{u}_l}|^2 = \bold{C}_{j, l}^2 + \bold{D}_{j, l}^2 = p_{j, l} \in \Q[A_{1, 1}, \ldots , B_{d^2, d}]
\end{equation*} 
is a polynomial in the variables $A_{1, 1}, \ldots, B_{d^2, d}$ with coefficients in the field $\Q$.
Thus,  a set $\{\bold{u}_1, \bold{u}_2, \cdots, \bold{u}_{d^2}\}$ of unit vectors in $\C^d$ representing a maximal set of equiangular lines corresponds to a real solution to the polynomial equation system
\begin{equation}
\label{eq:psic}
      p_{jl}(A_{1,1},\ldots,B_{d^2,d}) = \begin{cases}
           1, \quad j = l \\
           \frac{1}{d + 1} \quad j \neq l.
       \end{cases}
    \end{equation}

More precisely, the existence of $d^2$ unit vectors in $\C^d$ representing a set of equiangular lines means that we can find solutions in $\R$ for the variables $A_{1,1},\ldots,B_{d^2,d}$ (which are real parts of complex coefficients).


\subsection{Polynomial equations for Weyl-Heisenberg covariant SIC-POVMs}
\label{subsec:poly}

Let $\Z_d$ be the integers modulo $d$ and $I_d$ be the $d$-dimensional identity matrix. 
Set $\omega_d = e^{2\pi i/d}$ and define the $d$-dimensional unitary matrices

\begin{equation*}
    \bold{U} = 
    \begin{pmatrix}
        1 & 0 & 0 & \cdots & 0 \\
        0 & \omega_d & 0 & \cdots & 0 \\
        0 & 0 & \omega_d^2 & \cdots & 0 \\
        \vdots & \vdots & \vdots & \ddots & \vdots \\
        0 & 0 & 0 & \cdots & \omega_d^{d - 1} \\
    \end{pmatrix}, \quad
    \bold{V} = 
    \begin{pmatrix}
        0 & 1 & 0 & 0 &\cdots & 0 \\
        0 & 0 & 1 & 0 & \cdots & 0 \\
        \vdots & \vdots & \vdots & \vdots & \ddots & \vdots \\
        0 & 0 & 0 & 0 & \cdots & 1 \\
        1 & 0 & 0 & 0 & \cdots & 0
    \end{pmatrix}.
\end{equation*}
The following relations hold:
\begin{align*}
    \begin{split}
    \bold{U}^d &= \bold{V}^d = I_d
    \end{split} \\
    \begin{split}
    \bold{V}^a\bold{U}^b &= \omega_d^{ab}\bold{U}^b\bold{V}^a, \quad \forall a, b \in \Z_d.
    \end{split}
\end{align*}
The matrices $\bold{V}^a\bold{U}^b$, $a,b\in\Z_d$, are known as Weyl matrices, and the group generated by $\bold{U}$ and $\bold{V}$ is called the Weyl-Heisenberg group 
denoted by $H(d)$ and is explicitly given by
\begin{equation*}
    H(d) = \langle \bold{U}, \bold{V} \rangle = \{\omega_d^c\bold{V}^a\bold{U}^b : a, b, c \in \Z_d\}.
\end{equation*}
Its center is $Z(H(d)) = \{\omega_d^iI_d : 0 \leq i \leq d-1\}$
and the quotient $H(d)/Z(H(d)) $ is a group of order $d^2$ given by

\begin{equation*}
    H(d)/Z(H(d)) = \{\bold{V}^a\bold{U}^b : a, b \in \Z_d\}.
\end{equation*}

A fiducial vector $\bold{v} = (v_0, \ldots, v_{d - 1})^T$ is a vector such that the $d^2$ vectors $\bold{v}_{a, b} = \bold{V}^a\bold{U}^b\bold{v}$, $a,b\in\Z_d$ represent a set of $d^2$ equiangular lines. The resulting SIC-POVM is often referred to as being a Weyl-Heisenberg covariant SIC-POVM, or as being covariant under the action of the Weyl-Heisenberg group, and the corresponding fiducial vector a Weyl-Heisenberg fiducial vector. 
Since $H(d)/Z(H(d))$ forms a group, the necessary and sufficient conditions defined by \Cref{eq:condition_for_maximal_set_of_equiangular_lines} become the following:
\begin{equation}
\label{eq:wh_fid_cond}
    |\innerproduct{\bold{v}_{a, b}}{\bold{v}}|^2 = \begin{cases}
        1, \quad a = b = 0 \\
        \frac{1}{d + 1}, \quad (a, b) \in \Z_d^2/\{(0, 0)\}.
    \end{cases}
\end{equation}

We decompose the real and imaginary part of each coefficient, using $0$-indexing to simplify modular arithmetic, as follows:
\begin{equation*}
    \bold{v} = \begin{pmatrix}
        v_0 \\
        \vdots \\
        v_{d - 1}
    \end{pmatrix} = \begin{pmatrix}
        A_0 \\
        \vdots \\
        A_{d - 1}
    \end{pmatrix} + i\begin{pmatrix}
        B_0 \\
        \vdots \\
        B_{d - 1}
    \end{pmatrix}.
\end{equation*}

Now using 
\begin{align*}
    (\bold{V}^a \bold{u})_k &= u_{a + k ({\rm mod}~d)},  \\
    (\bold{U}^b \bold{v})_k &= \omega_d^{bk}v_k,
\end{align*}
we have that
\begin{equation*}
(\bold{v}_{a,b})_k = 
    (\bold{V}^a \bold{U}^b \bold{v})_k = \omega_d^{b(a + k)}v_{a + k ({\rm mod}~d) }.
\end{equation*}
   
Substituting $j = k + a$ and considering $j-a \pmod{d}$, we have
\begin{align*}
\begin{split}
    \innerproduct{\bold{v}_{a, b}}{\bold{v}} &=\sum_{j = 0}^{d - 1} \omega_d^{bj}v_j \overline{v}_{j - a}\\   
    & =\sum_{j = 0}^{d - 1} \omega_d^{bj} 
    [(A_jA_{j - a} + B_jB_{j - a}) + i(A_{j - a}B_j - A_jB_{j - a}) ] \\
    & = \sum_{j = 0}^{d - 1} \omega_d^{bj} (C_j^{(a)} + i D_j^{(a)}).
\end{split}
\end{align*}
where $C_j^{(a)} = A_jA_{j - a} + B_jB_{j - a}$ and $D_j^{(a)} = A_{j - a}B_j - A_jB_{j - a}$.
       
Note that $\omega_d^{bj} = \Re(\omega_d^{bj})+i\Im(\omega_d^{bj})$, 
$ \omega_d^{bj} + \overline{\omega_d^{bj}}=2\cos(\frac{2bj\pi}{d})$, and
$ \omega_d^{bj} - \overline{\omega_d^{bj}}=2i\sin(\frac{2bj\pi}{d})$
so
\begin{equation*}
    \omega_d^{bj} = \frac{\omega_d^{bj} + \overline{\omega_d^{bj}}}{2} + i \frac{\omega_d^{bj} - \overline{\omega_d^{bj}}}{2i} = \alpha_{bj} + i \beta_{bj}.
\end{equation*}
 
We thus get
\begin{align*}
\begin{split}
    \innerproduct{\bold{v}_{a, b}}{\bold{v}} &= \sum_{j = 0}^{d - 1}(\alpha_{bj} + i \beta_{bj})(C_j^{(a)} + iD_j^{(a)}) \\
    &= \sum_{j = 0}^{d - 1}C_j^{(a)} \alpha_{bj} - D_j^{(a)}\beta_{bj} + i(C_j^{(a)}\beta_{bj} + D_j^{(a)} \alpha_{bj})\\ 
    &= \sum_{j = 0}^{d - 1}(C_j^{(a)} \alpha_{bj} - D_j^{(a)}\beta_{bj}) + i\sum_{j = 0}^{d - 1}(C_j^{(a)}\beta_{bj} + D_j^{(a)} \alpha_{bj}) \\
    &= \bold{C}_{a, b} + i \bold{D}_{a, b}
\end{split}
\end{align*}
where $\bold{C}_{a, b}, \bold{D}_{a, b}$ are polynomials in $A_0,\ldots,A_{d-1}, B_0,\ldots,B_{d-1}$ whose coefficients are $\alpha_{bj},\beta_{bj}$, living, respectively, in $\Q(\omega_d+\overline{\omega_d})=\Q(\cos(\tfrac{2\pi}{d}))$ and $\Q(\sin(\tfrac{2\pi}{d}))$.
It follows that
\begin{align*}
\begin{split}
    |\innerproduct{\bold{v}_{a, b}}{\bold{v}}|^2 &= \bold{C}_{a, b}^2 + \bold{D}_{a, b}^2 \\
    &= p_{a, b} \in \Q(\cos(\tfrac{2\pi}{d}), \sin(\tfrac{2\pi}{d})) [A_0, \ldots, A_{d - 1}, B_0, \ldots, B_{d - 1}].
\end{split}
\end{align*}

Thus, a vector $\bold{v} = (v_0, \ldots, v_{d - 1})^T$ representing a fiducial vector means a set of real solutions to the polynomial equation system
\begin{equation}
\label{eq:fid}
   p_{a,b}(A_0,\ldots,A_{d-1},B_0,\ldots,B_{d-1}) = \begin{cases}
        1, \quad a = b = 0 \\
        \frac{1}{d + 1}, \quad (a, b) \in \Z_d^2/\{(0, 0)\}.
    \end{cases}
\end{equation}


\subsection{Polynomial equations for the real case}

As in the complex case, we construct the polynomial equations for the system using \Cref{eq:real_equiangular_line_condition}. Set
\begin{equation*}
    \bold{u}_j = \begin{pmatrix}
        u_{j, 1} \\
        \vdots \\
        u_{j, d}
    \end{pmatrix}.
\end{equation*}
For $1 \leq j \leq N$ where $N$ is the number of lines we have in $\R^d$. Then we write the inner product as
\begin{equation*}
    \innerproduct{\bold{u}_j}{\bold{u}_l} = \sum_{k = 1}^d u_{j, k}u_{l, k} = \bold{C}_{j, l}
\end{equation*}
hence we have
\begin{equation*}
    |\innerproduct{\bold{u}_j}{\bold{u}_l}|^2 = \bold{C}_{j, l}^2 = p_{j, l} \in \Q[u_{1, 1}, \ldots, u_{N, d}]
\end{equation*}
is a polynomial in the variables $u_{1, 1}, \ldots, u_{N, d}$ with coefficients in the field $\Q$. Thus, a set $\{\bold{u}_1, \bold{u}_2, \cdots, \bold{u}_{N}\}$ of unit vectors in $\R^d$ (with an angle $\alpha$ between them) representing a set of equiangular lines means a set of solutions to the polynomial equation system
\begin{equation}\label{eq:realsystem}
  p_{j,l}(u_{1,1},\ldots,u_{N,d}) = \begin{cases}
       1, \quad j = l \\
       \alpha \quad j \neq l.
   \end{cases}
\end{equation}

It is known \cite[Theorem 3.4]{LemmensSeidel} that if $N>2d$, then  
$1/\alpha$ is an odd integer.

%
%
%

\section{Existence of Algebraic Solutions}
\label{sec:algebraic}

It is conjectured that $d^2$ equiangular lines always exist in $\C^d$ \cite{zauner_2011}.  Assuming this conjecture is true, there exist real solutions to the polynomial systems defined by \Cref{eq:psic} and \Cref{eq:fid}. 
Our goal is to show that under this assumption, there exist solutions that live in $\R \cap \overline{\Q}$, where 
$\overline{\Q}$ is the algebraic closure of $\Q$. By definition, it is an algebraic extension of $\Q$ that is algebraically closed (meaning that every non-constant polynomial with coefficients in this field has a root in it). We recall that being algebraic means that every element in $\overline{\Q}$ is the root of some polynomial with coefficients in $\Q$. The field $\overline{\Q}$ is also called the field of algebraic numbers. We may think of $\overline{\Q}$ as the union of all number fields, that is $\overline{\Q} = \bigcup K$, where $K$ runs over all possible finite extensions of $\Q$.

We also recall that the definition of a real closed field is being a real field (meaning that $-1$ is not a sum of squares) that has no non-trivial real algebraic extension \cite[Definition 1.2.1]{Bochnak_Coste_Roy_1998}. An important example for our case is the field $F = \R \cap \overline{\Q}$ is a real closed field, which can be proved to be real closed by noting that $F[i] = \Qbar$ \cite[Theorem 1.2.2]{Bochnak_Coste_Roy_1998}.

This section uses two classical tools: Hilbert's Nullstellensatz and Gr\"{o}bner bases. We will recall both the complex and real versions of Hilbert's Nullstellensatz, and use the real version to prove the existence of solutions with coefficients in $\R \cap \overline{\Q}$, and we will use the complex version, alongside Gr\"obner basis, which makes explicit why a $0$-dimensional variety forces all solutions to be algebraic.

%
%
%

\subsection{Hilbert's Nullstellensatz viewpoint}

We start with the complex viewpoint. 
Let $k$ be a field.  Let $I=\langle f_1,\ldots,f_s\rangle$ be the ideal generated over $k[X_1,\ldots,X_n]$ by 
$f_1,\ldots,f_s\in k[X_1,\ldots,X_n]$. By definition, its variety is 
\[
\V_k(I) = \{ (a_1, \ldots, a_n) \in k^n,~ h(\bold{a}) = 0 \mbox{ for all }h \in I\}. 
\]

We recall a weak version of Hilbert's Nullstellensatz and a useful corollary.

\begin{theorem}(\textbf{Weak Nullstellensatz}, \cite[Chapter 4.1, Theorem 1]{cox2007})
\label{theorem:weak_nss_cox}
If $K$ is an algebraically closed field  and $J \subseteq K[X_1, \ldots, X_n]$ is an ideal such that $\V_K(J) = \varnothing$, then $J = K[X_1, \ldots, X_n]$.
\end{theorem}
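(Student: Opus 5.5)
We prove the Weak Nullstellensatz by contraposition: assuming $J$ is a proper ideal of $K[X_1,\ldots,X_n]$, we will produce a point of $\V_K(J)$. The plan has three steps: pass to a maximal ideal, identify its residue field with $K$ using Zariski's lemma, and read off a common zero.

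\textbf{Step 1 (reduction to a maximal ideal).} Since $J \neq K[X_1,\ldots,X_n]$, Zorn's lemma gives a maximal ideal $\mathfrak{m} \supseteq J$. One could instead use Noetherianity via Hilbert's basis theorem. Because $\V_K(\mathfrak{m}) \subseteq \V_K(J)$, it suffices to show $\V_K(\mathfrak{m}) \neq \varnothing$. Set $L = K[X_1,\ldots,X_n]/\mathfrak{m}$. This is a field containing $K$, and it is finitely generated as a $K$-algebra by the images $\bar X_1,\ldots,\bar X_n$.

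\textbf{Step 2 (Zariski's lemma).} We will show that a field $L$ finitely generated as an algebra over a field $K$ is a finite, hence algebraic, extension of $K$. This is the main obstacle. We would argue by induction on the number $n$ of generators, following the standard Artin--Tate style argument.
\begin{itemize}
\item If $n=1$, then $L = K[\bar X_1]$ is a field, so $\bar X_1$ is invertible in $K[\bar X_1]$. Hence $\bar X_1$ cannot be transcendental, since $K[t]$ is not a field. Thus $\bar X_1$ is algebraic over $K$.
\item For general $n$, put $K_1 = K(\bar X_1)$, so that $L = K_1[\bar X_2,\ldots,\bar X_n]$. By induction, $L$ is finite over $K_1$.
\item If $\bar X_1$ is algebraic over $K$, we are done by transitivity of finiteness.
\item Otherwise $K_1 \cong K(t)$. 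Each $\bar X_j$ with $j \geq 2$ satisfies a monic polynomial over $K(t)$. Clearing denominators, there is a single $f \in K[t]\setminus\{0\}$ such that every $\bar X_j$ is integral over $K[t][1/f]$. Then $L$, and in particular $K(t)$, is integral over $K[t][1/f]$.
\item Since $K[t]$ has infinitely many irreducibles, we can choose an irreducible $p$ not dividing $f$. Then $1/p$ would be integral over $K[t][1/f]$. This is impossible because $K[t][1/f]$ is integrally closed (it is a localization of a UFD) and $1/p \notin K[t][1/f]$.
\end{itemize}
So $L/K$ is algebraic.

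\textbf{Step 3 (producing the point).} Since $K$ is algebraically closed and $L/K$ is algebraic, the inclusion $K \to L$ is an isomorphism: any $\lambda \in L$ has a minimal polynomial over $K$, which splits into linear factors in $K$, so $\lambda \in K$. Hence each $\bar X_i$ equals some $a_i \in K$, which means $X_i - a_i \in \mathfrak{m}$. Therefore $\mathfrak{m} \supseteq \langle X_1-a_1,\ldots,X_n-a_n\rangle$. The right-hand ideal is maximal, being the kernel of evaluation at $\bold{a}$, so the two ideals are equal. Every $h \in J \subseteq \mathfrak{m}$ therefore vanishes at $\bold{a} = (a_1,\ldots,a_n)$. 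This gives $\bold{a} \in \V_K(J)$, contradicting $\V_K(J)=\varnothing$.

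Alternatively, for uncountable $K$ such as $\C$, Step 2 could be replaced by a dimension count: $L$ has countable $K$-dimension, while $\{1/(t-c): c\in K\}$ would be uncountably many linearly independent elements if some $\lambda\in L$ were transcendental. However, the paper applies the statement to $\Qbar$, which is countable, so the general Zariski argument is the route we would take.
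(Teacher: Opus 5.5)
Your proof is correct, but it follows a different route from the one the paper relies on. The paper gives no proof of its own; it quotes \cite{cox2007}, where the argument is elimination-theoretic:
\begin{itemize}
\item induct on $n$;
\item use a generic linear change of coordinates (available because an algebraically closed field is infinite) to put one generator of $J$ in the form $cX_1^N+(\text{terms of lower degree in }X_1)$ with $c\neq 0$;
\item apply the Extension Theorem, proved there with resultants, to show that $\V_K(J)$ projects onto $\V_K(J\cap K[X_2,\ldots,X_n])$;
\item conclude that the latter is empty, so $1\in J\cap K[X_2,\ldots,X_n]\subseteq J$ by induction.
\end{itemize}

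You instead pass to a maximal ideal and prove Zariski's lemma via the integrality argument over $K[t][1/f]$. Every step checks out; the induction is, as it must be, over arbitrary base fields. You are also right that the uncountability shortcut would not cover $\Qbar$, which is where the paper's transfer theorem applies the result.

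The trade-off is this. The cited proof stays inside the polynomial and elimination toolkit of the paper's Gr\"obner-basis subsection and is essentially constructive. Yours needs Zorn's lemma (or Noetherianity) and basic facts on integral extensions, but proves more. It shows that the maximal ideals of $K[X_1,\ldots,X_n]$ are exactly the ideals $\langle X_1-a_1,\ldots,X_n-a_n\rangle$, and Zariski's lemma holds over any field.

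That generality pays off here. Applied over the non-closed field $k$ itself, it gives the paper's subsequent transfer theorem in one step:
\begin{itemize}
\item a common zero in $K^n$ makes $\langle f_1,\ldots,f_s\rangle\subseteq k[X_1,\ldots,X_n]$ a proper ideal;
\item the residue field of a maximal ideal containing it is a finite extension of $k$;
\item a $k$-embedding of that field into $\bar k$ sends the images of $X_1,\ldots,X_n$ to a common zero in $\bar k^n$.
\end{itemize}
For $k=\Q$, all coordinates of that zero even lie in a single number field.
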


\begin{corollary}
\label[corollary]{cor:weak_nss_cor_commonzero_or_certificatefor1}
    For $K$ an  algebraically closed field, a collection of polynomials $f_1, \ldots f_s$ in $K[X_1, \ldots, X_n]$ either has a common zero in $K^n$ or there exist $g_1,\ldots ,g_s$ in $K[X_1, \ldots, X_n]$ such that $\sum_{i = 1}^ s g_i f_i = 1$.
\end{corollary}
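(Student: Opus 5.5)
The plan is to apply the Weak Nullstellensatz (\Cref{theorem:weak_nss_cox}) to the ideal $J=\langle f_1,\ldots,f_s\rangle \subseteq K[X_1,\ldots,X_n]$ generated by the given polynomials, and split into two cases according to whether $\V_K(J)$ is empty.

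\emph{First case: $\V_K(J)\neq\varnothing$.} Then there is a point $\bold{a}\in K^n$ at which every element of $J$ vanishes. In particular $f_i(\bold{a})=0$ for each $i$, since $f_i\in J$, so the $f_i$ have a common zero in $K^n$ and the first alternative holds.

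\emph{Second case: $\V_K(J)=\varnothing$.} Since $K$ is algebraically closed, \Cref{theorem:weak_nss_cox} gives $J=K[X_1,\ldots,X_n]$, so $1\in J$. By the definition of the ideal generated by $f_1,\ldots,f_s$, its elements are exactly the finite combinations $\sum_{i=1}^s g_if_i$ with $g_i\in K[X_1,\ldots,X_n]$. Writing $1$ in this form produces the required $g_1,\ldots,g_s$ with $\sum_i g_if_i=1$.

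Since every ideal falls into exactly one of these two cases, the dichotomy follows. The two alternatives are in fact mutually exclusive: evaluating $\sum_i g_if_i=1$ at a common zero would give $0=1$. This is worth remarking, but the statement does not require it.

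There is no real obstacle; the only point needing care is the routine identification of $\langle f_1,\ldots,f_s\rangle$ with the set of $K[X_1,\ldots,X_n]$-linear combinations of the $f_i$. All of the mathematical content is carried by \Cref{theorem:weak_nss_cox}.
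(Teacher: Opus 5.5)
Your proof is correct and follows the paper's argument: set $J=\langle f_1,\ldots,f_s\rangle$, split on whether $\V_K(J)$ is empty, and in the empty case use the Weak Nullstellensatz to get $1\in J$, which gives $1=\sum_i g_if_i$. You also note that the two alternatives are mutually exclusive. The statement does not need this, but the paper's next theorem relies on it without saying so when it applies the corollary over $K$ to rule out a common zero in $K^n$.
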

\begin{proof}
    Let $J = \langle f_1, \ldots, f_s \rangle$. Now either $\V_K(J) = \varnothing$ or $\V_K(J) \neq \varnothing$. If $\V_K(J) = \varnothing$, then by \Cref{theorem:weak_nss_cox}, this implies that $J = K[X_1, \ldots, X_n]$ hence $1 \in J$ and therefore there exists $g_1, \ldots, g_s \in K[X_1, \ldots, X_n]$ such that $\sum_{i = 1}^s g_if_i = 1$. If $\V_K(J) \neq \varnothing$, then $\V_K(J)$ contains at least one common solution to $f_1,\ldots,f_s$.
\end{proof}

\begin{theorem}
Let $k$ be a field, $\bar{k}$ be its algebraic closure and $K$ be an algebraically closed extension of $\bar{k}$. If $f_1,\ldots,f_s\in k[X_1,\ldots,X_n]$ have a common solution in $K^n$, then they have a common solution in $\bar{k}^n$.
\end{theorem}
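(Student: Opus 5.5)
We argue by contraposition, using \Cref{cor:weak_nss_cor_commonzero_or_certificatefor1} twice: once over $\bar{k}$ and once over $K$.

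Suppose $f_1,\ldots,f_s\in k[X_1,\ldots,X_n]$ have no common zero in $\bar{k}^n$. Since $\bar{k}$ is algebraically closed, \Cref{cor:weak_nss_cor_commonzero_or_certificatefor1} applied with the field $\bar{k}$ gives polynomials $g_1,\ldots,g_s\in \bar{k}[X_1,\ldots,X_n]$ such that
\[
\sum_{i=1}^s g_i f_i = 1 .
\]
The inclusion $\bar{k}\subseteq K$ gives $\bar{k}[X_1,\ldots,X_n]\subseteq K[X_1,\ldots,X_n]$, and this is a ring homomorphism. So the identity above still holds in $K[X_1,\ldots,X_n]$.

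Now let $\bold{a}\in K^n$ be a common zero of the $f_i$, which exists by hypothesis. Evaluating at $\bold{a}$ is a ring homomorphism $K[X_1,\ldots,X_n]\to K$. Applying it to the identity gives
\[
1=\sum_{i=1}^s g_i(\bold{a})f_i(\bold{a}) = 0 ,
\]
which is a contradiction. Hence the $f_i$ must have a common zero in $\bar{k}^n$.

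The step that needs care is the middle one: the certificate is produced over $\bar{k}$ but must be used over $K$. This requires the Nullstellensatz certificate to have coefficients in $\bar{k}$ itself, not merely in some larger field. That is exactly what the corollary provides when it is applied with $\bar{k}$ as the algebraically closed field. The remaining steps are routine, since field inclusion and evaluation are both ring homomorphisms.

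An alternative route gives a certificate already over $k$. One notes that $1\in\langle f_1,\ldots,f_s\rangle\bar{k}[X]$ is a linear-algebra condition on coefficients with entries in $k$, so it is solvable over $k$. That refinement is not needed for this statement.
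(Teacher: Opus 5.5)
Your proof is correct and follows the paper's argument: you obtain a certificate $\sum g_i f_i = 1$ with $g_i \in \bar{k}[X_1,\ldots,X_n]$ from \Cref{cor:weak_nss_cor_commonzero_or_certificatefor1}, regard it in $K[X_1,\ldots,X_n]$, and contradict the existence of a common zero in $K^n$. Where the paper just says it applies the corollary ``on $K$'', you evaluate the identity at $\bold{a}$ to get $1=0$, which is the cleaner justification, since the corollary's ``either/or'' is not stated as exclusive.
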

\begin{proof}
We will apply \Cref{cor:weak_nss_cor_commonzero_or_certificatefor1} twice on the tower of fields $k \subseteq \overline{k} \subseteq K$. 

By assumption, there is a common solution in $K^n$. Now, suppose by contradiction that there is no common solution in $\bar{k}^n$. 
By \Cref{cor:weak_nss_cor_commonzero_or_certificatefor1} on $\overline{k}$, there exist $g_1, \ldots, g_s$ in $\overline{k}[X_1, \ldots, X_n]$ such that $\sum_{i = 1}^s g_if_i = 1$. However, $g_i \in \overline{k}[X_1, \ldots, X_n] \subseteq K[X_1, \ldots, X_n]$, so by applying \Cref{cor:weak_nss_cor_commonzero_or_certificatefor1} on $K$, it must be that there is no common zero in $K^n$ which contradicts our assumption.
    \end{proof}

We next look at the real viewpoint. 
Let again $I=\langle f_1,\ldots,f_s\rangle$ be the ideal generated by $f_1,\ldots,f_s\in R[X_1,\ldots,X_n]$ for a real closed $R$. 
By definition, its zero set is
\[
\V_R(I) =\{ \bold{a} =  (a_1,\ldots,a_n) \in R^n,~h(\bold{a})=0 \mbox{ for all }h\in I\}.
\]
For $U\subseteq R^n$, the vanishing ideal $\I_R(U)$ is by definition
\[
\I_R(U) = \{ f \in R[X_1,\ldots,X_n],~f(\bold{a})=0 \mbox{ for all }\bold{a} \in U \}. 
\]

\begin{theorem}(\textbf{Real Nullstellensatz}, \cite[Corollary 4.1.8]{Bochnak_Coste_Roy_1998}) If $R$ is a real closed field and $I \subseteq R[X_1,\ldots,X_n]$ is an ideal, then
$\I_R(\V_R(I))=\sqrt[R]{I}$, where 
\[
\sqrt[R]{I}=\{ h \in R[X_1,\ldots,X_n],~\exists m \in \mathbb{N}, ~\exists p_1,\ldots,p_t \in R[X_1,\ldots,X_n],~h^{2m}+p_1^2+\cdots+p_t^2 \in I \}.
\]    
\end{theorem}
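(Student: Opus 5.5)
The statement is the classical Real Nullstellensatz, and I would prove it in the standard two-inclusion way. The easy inclusion is $\sqrt[R]{I}\subseteq \I_R(\V_R(I))$. Suppose $h^{2m}+p_1^2+\cdots+p_t^2\in I$ and $\bold{a}\in \V_R(I)$. Evaluating at $\bold{a}$ gives $h(\bold{a})^{2m}+\sum p_i(\bold{a})^2=0$ in $R$. Since $R$ is real (so $-1$ is not a sum of squares), a sum of squares in $R$ vanishes only if every term vanishes. Hence $h(\bold{a})=0$, and $h\in\I_R(\V_R(I))$. All the work is in the reverse inclusion, which I would prove by contraposition: given $h\notin \sqrt[R]{I}$, I produce a point $\bold{a}\in \V_R(I)$ with $h(\bold{a})\neq 0$.

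\textbf{Step 1 (algebraic reduction).} Let $h\notin\sqrt[R]{I}$. Among the ideals $J\supseteq I$ satisfying
\[
h^{2m}+\textstyle\sum p_i^2\notin J \quad \text{for all } m \text{ and all } p_i,
\]
Zorn's lemma (or noetherianity) gives a maximal one, $\mathfrak p$. I would show that $\mathfrak p$ is prime and real, meaning $\sum q_i^2\in\mathfrak p$ implies every $q_i\in\mathfrak p$, and that $h\notin\mathfrak p$. The argument is the usual one for real radicals: if $fg\in\mathfrak p$ with $f,g\notin\mathfrak p$, then $\mathfrak p+(f)$ and $\mathfrak p+(g)$ both contain elements of the form $h^{2m}+\sum p_i^2$. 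Multiplying these two relations and using $fg\in\mathfrak p$ produces such an element inside $\mathfrak p$, a contradiction. The "real" property is handled the same way, by adjoining a $q_1$ with $\sum q_i^2\in\mathfrak p$ and squaring the resulting relation.

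\textbf{Step 2 (a point over a real closed extension).} Let $L$ be the fraction field of $R[X_1,\ldots,X_n]/\mathfrak p$. Since $\mathfrak p$ is real, $-1$ is not a sum of squares in $L$, so $L$ is formally real. Choose an ordering of $L$ and let $\widetilde L$ be its real closure, which contains $R$. The tuple $\bar{\bold x}=(\bar X_1,\ldots,\bar X_n)\in\widetilde L^n$ is then a common zero of the generators $f_1,\ldots,f_s$ of $I$, and $h(\bar{\bold x})\neq 0$ because $h\notin\mathfrak p$.

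\textbf{Step 3 (transfer back to $R$), the main obstacle.} The first-order sentence
\[
\exists \bold{x}\;\bigl(f_1(\bold x)=\cdots=f_s(\bold x)=0 \ \wedge\ h(\bold x)\neq 0\bigr),
\]
whose parameters are the coefficients in $R$, holds in $\widetilde L$. I need it to hold in $R$. This is exactly the Tarski--Seidenberg transfer principle, equivalently the Artin--Lang homomorphism theorem: $R\subseteq\widetilde L$ is an extension of real closed fields, and real closed fields admit quantifier elimination, so such an extension is elementary. Proving the transfer principle from scratch would be the genuinely hard part. I would cite it from \cite{Bochnak_Coste_Roy_1998} rather than reprove quantifier elimination. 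If a self-contained route were needed, I would first try the Artin--Lang approach: reduce to a finitely generated $R$-algebra and use induction on the number of variables via Sturm-type sign conditions.

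With the transfer in hand we obtain $\bold a\in R^n$ with $\bold a\in\V_R(I)$ and $h(\bold a)\neq 0$. Thus $h\notin\I_R(\V_R(I))$, which proves $\I_R(\V_R(I))\subseteq\sqrt[R]{I}$ and completes the argument.
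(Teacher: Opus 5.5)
The paper does not prove this statement; it cites \cite[Corollary 4.1.8]{Bochnak_Coste_Roy_1998}. Your argument is correct and is essentially the standard proof given there: an ideal maximal among those containing $I$ and avoiding the multiplicatively closed set $\{h^{2m}+\sum p_i^2\}$ is a real prime not containing $h$, and its function field is therefore formally real. The Artin--Lang homomorphism theorem (Tarski--Seidenberg transfer from the real closure back to $R$) then yields a point of $\V_R(I)$ where $h$ does not vanish, so citing the transfer principle rather than reproving quantifier elimination matches how the paper itself treats this result.
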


\begin{corollary}
\label[corollary]{cor:real}
For $R$ a real closed field, a collection of polynomials $f_1,\ldots,f_s\in R[X_1,\ldots,X_n]$ either has a common zero in $R^n$ or there exist polynomials $g_1,\ldots,g_s$, $p_1,\ldots,p_t \in R[X_1,\ldots,X_n]$ such that
\[
\sum_{i=1}^sg_i f_i = 1 + \sum_{i=1}^tp_i^2.
\]
\end{corollary}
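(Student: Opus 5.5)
The plan mirrors the proof of \Cref{cor:weak_nss_cor_commonzero_or_certificatefor1}, with the Real Nullstellensatz in place of the Weak Nullstellensatz. Set $I=\langle f_1,\ldots,f_s\rangle \subseteq R[X_1,\ldots,X_n]$. Either $\V_R(I)\neq\varnothing$ or $\V_R(I)=\varnothing$. In the first case there is a common zero of $f_1,\ldots,f_s$ in $R^n$, because every element of $I$ vanishes there, in particular each $f_i$. So only the case $\V_R(I)=\varnothing$ needs work.

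When $\V_R(I)=\varnothing$, the vanishing ideal is $\I_R(\varnothing)=R[X_1,\ldots,X_n]$, since the condition ``$f(\bold{a})=0$ for all $\bold{a}\in\varnothing$'' holds vacuously. By the Real Nullstellensatz we get $\sqrt[R]{I}=R[X_1,\ldots,X_n]$, so in particular $h=1\in\sqrt[R]{I}$. Unwinding the definition gives $m\in\mathbb{N}$ and $p_1,\ldots,p_t\in R[X_1,\ldots,X_n]$ such that
\[
1^{2m}+p_1^2+\cdots+p_t^2 = 1+\sum_{i=1}^t p_i^2 \in I.
\]
Membership in $I=\langle f_1,\ldots,f_s\rangle$ then supplies $g_1,\ldots,g_s\in R[X_1,\ldots,X_n]$ with $\sum_{i=1}^s g_if_i = 1+\sum_{i=1}^t p_i^2$. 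This is exactly the certificate in the statement.

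There is no real obstacle here; the only points needing care are bookkeeping. The $h=1$ specialization makes the exponent $2m$ irrelevant. The case $t=0$ (an empty sum of squares) is allowed and corresponds to $1\in I$. One could also note that the two alternatives are mutually exclusive: evaluating the identity at a common real zero gives $0=1+\sum p_i(\bold{a})^2>0$, since $R$ is real. This observation is not required for the ``either/or'' as stated, but it is worth recording for later use.
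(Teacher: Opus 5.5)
Your proposal is correct and follows the paper's proof essentially verbatim. You split on whether $\V_R(I)$ is empty, use $\I_R(\varnothing)=R[X_1,\ldots,X_n]$ and the Real Nullstellensatz to get $1\in\sqrt[R]{I}$, and read off the $g_i$ from $1+\sum p_i^2\in I$; your extra remarks (the nonempty case, $t=0$, and mutual exclusivity because $R$ is real) are correct additions.
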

\begin{proof}
Set $I=\langle f_1,\ldots, f_s \rangle$. Now either $\V_R(I) = \varnothing$ or $\V_R(I) \neq \varnothing$. If $\V_R(I) = \varnothing$, then $\I_R(\V_R(I))=R[X_1,\ldots,X_n]$. Invoking the Real Nullstellensatz, we have that $1 \in \sqrt[R]{I}$ and there exist $p_1,\ldots,p_t$ such that $1^{2m}+p_1^2+\cdots+p_t^2 \in I$, hence by the definition of $I$, there exist $g_1,\ldots,g_s\in R[X_1,\ldots,X_n]$ such that
$\sum_{i=1}^s g_i f_i = 1+\sum_{i=1}^t p_i^2$.
\end{proof}

\begin{theorem}\label{theor:real}
Let $F\hookrightarrow R$ be an inclusion between two real closed fields. If $f_1,\ldots,f_s\in F[X_1,\ldots,X_n]$ have a common solution in $R^n$, then they have a common solution in $F^n$.
\end{theorem}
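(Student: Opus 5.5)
The plan mirrors the proof of the preceding complex theorem, replacing the Weak Nullstellensatz by \Cref{cor:real}, applied once over $F$ and once over $R$. I argue by contradiction: assume the $f_i$ have a common zero in $R^n$ but none in $F^n$.

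First apply \Cref{cor:real} over the real closed field $F$. Since there is no common zero in $F^n$, there are polynomials $g_1,\ldots,g_s,p_1,\ldots,p_t \in F[X_1,\ldots,X_n]$ with
\[
\sum_{i=1}^s g_i f_i = 1 + \sum_{i=1}^t p_i^2 .
\]
This is an identity in $F[X_1,\ldots,X_n]$. Through the inclusion $F\hookrightarrow R$ it remains an identity in $R[X_1,\ldots,X_n]$.

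Next, evaluate this identity at a common zero $\bold{a}\in R^n$ of the $f_i$. The left-hand side vanishes, so $0 = 1 + \sum_i p_i(\bold{a})^2$ in $R$. Hence $-1$ is a sum of squares in $R$, contradicting that $R$ is real. Equivalently, the second alternative of \Cref{cor:real} over $R$ holds, so there is no common zero in $R^n$, contradicting the hypothesis.

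The only points needing care are checking that the certificate transfers along $F\hookrightarrow R$ and using that a real closed field is in particular real. Both follow directly from the definitions, so I expect no real obstacle; the substance lies entirely in the Real Nullstellensatz. Alternatively, one could invoke the Tarski--Seidenberg transfer principle, since existence of a common zero is a first-order sentence with parameters in $F$ and $F\subseteq R$ is an elementary extension. The certificate argument stays closer to the tools already introduced in the paper.
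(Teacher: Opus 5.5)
Your proof is correct and follows the same route as the paper: obtain the certificate $\sum_i g_i f_i = 1+\sum_i p_i^2$ from \Cref{cor:real} over $F$, view it in $R[X_1,\ldots,X_n]$, and conclude that there is no common zero in $R^n$. Your evaluation at $\bold{a}$, giving $-1=\sum_i p_i(\bold{a})^2$ in the real field $R$, supplies the step the paper leaves implicit when it invokes \Cref{cor:real} ``on $R$''. That corollary only asserts that at least one of its two alternatives holds, not that they exclude each other.
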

\begin{proof}
We will apply \Cref{cor:real} twice. 

By assumption, there is a common solution in $R^n$. Now, suppose by contradiction that there is no common solution in $F^n$. 
By \Cref{cor:real} on $F$, there exist $g_1, \ldots, g_s,p_1,\ldots,p_{t}$ in $F[X_1, \ldots, X_n]$ such that $\sum_{i = 1}^s g_if_i = 1 + \sum_{i=1}^tp_i^2$.  However, $g_i,p_i \in F[X_1, \ldots, X_n] \subseteq R[X_1, \ldots, X_n]$, so by applying \Cref{cor:real} on $R$, it must be that there is no common zero in $R^n$ which contradicts our assumption.
    \end{proof}

We can now explain why reported constructions of equiangular lines such as \cite{scott_grassl_2009} have algebraic coefficients.

\begin{theorem}\label{thm:existance_of_real_alg_lines_in_Cd}
If there exists a construction of $d^2$ equiangular lines in $\C^d$, there exists a construction of $d^2$ equiangular lines in $\Qbar^d$.  
\end{theorem}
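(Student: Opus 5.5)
The plan is to apply \Cref{theor:real} to the polynomial system \Cref{eq:psic}, with $F = \R \cap \Qbar$ and $R = \R$. The key point is that a construction of $d^2$ equiangular lines in $\C^d$ is the same thing as a real solution of \Cref{eq:psic}. Indeed, write each unit vector $\bold{u}_j$ through its real and imaginary parts $A_{j,k}, B_{j,k}$. By \Cref{eq:condition_for_maximal_set_of_equiangular_lines}, the vectors form a maximal equiangular set if and only if the real point $(A_{1,1},\ldots,B_{d^2,d}) \in \R^{2d^3}$ satisfies $p_{j,l} = 1$ for $j=l$ and $p_{j,l} = \frac{1}{d+1}$ for $j \neq l$.

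The first step is to set up the input to \Cref{theor:real}. Consider the polynomials
\[
f_{j,l} = p_{j,l} - 1 \ (j=l), \qquad f_{j,l} = p_{j,l} - \tfrac{1}{d+1} \ (j\neq l),
\]
for $1 \le j,l \le d^2$. Each $p_{j,l} = \bold{C}_{j,l}^2+\bold{D}_{j,l}^2$ lies in $\Q[A_{1,1},\ldots,B_{d^2,d}]$, so every $f_{j,l}$ has rational coefficients. Since $\Q \subseteq F$, these are polynomials in $F[A_{1,1},\ldots,B_{d^2,d}]$. By hypothesis there is a common zero of all the $f_{j,l}$ in $\R^{2d^3}$.

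Next I check the hypotheses of \Cref{theor:real}. The field $\R$ is real closed, and the paper has already noted that $F=\R\cap\Qbar$ is real closed, since $F[i]=\Qbar$. The inclusion $F\hookrightarrow\R$ is the obvious one. Therefore \Cref{theor:real} gives a common zero $(a_{1,1},\ldots,b_{d^2,d})\in F^{2d^3}$.

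Finally I translate back. Put $u_{j,k} = a_{j,k} + i\, b_{j,k}$. Each $u_{j,k}$ lies in $F[i]=\Qbar$, since $\Qbar$ is a field containing $i$ and $F$. The vectors $\bold{u}_j = (u_{j,1},\ldots,u_{j,d})^T$ then satisfy $|\innerproduct{\bold{u}_j}{\bold{u}_l}|^2 = p_{j,l}(a,b)$, which equals $1$ or $\frac{1}{d+1}$ as required. In particular each $\bold{u}_j$ is a unit vector. By the characterization \Cref{eq:condition_for_maximal_set_of_equiangular_lines}, they span $d^2$ equiangular lines, and all their coordinates lie in $\Qbar$. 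Since there are finitely many coordinates, they all lie in a single number field.

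There is no serious obstacle here. The one subtle point is why we pass to real and imaginary parts rather than using the complex Nullstellensatz transfer theorem. The equations $|\innerproduct{\bold{u}_j}{\bold{u}_l}|^2 = c$ involve complex conjugation, so they are not polynomial in the $u_{j,k}$ over $\C$. A complex solution of a naive complexification would not correspond to genuine unit vectors. Working over the real closed field $F$ avoids this problem. The real-closedness of $F$, already justified in the paper, is exactly what makes \Cref{theor:real} applicable.
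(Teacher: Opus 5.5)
Your proposal is correct and follows essentially the same route as the paper: you encode the condition as the rational polynomial system \Cref{eq:psic} in the $2d^3$ real and imaginary parts, apply \Cref{theor:real} with $F=\R\cap\Qbar\hookrightarrow\R$, and reassemble $u_{j,k}=A_{j,k}+iB_{j,k}\in\Qbar$. Your closing remark on why the complex Nullstellensatz would not suffice also matches the paper's own comment after the proof.
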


\begin{proof}
All we need to do is represent a maximal equiangular set of lines $\{\bold{u}_1,\ldots,\bold{u}_{d^2}\}$ as a solution to a system of polynomial equations over $\Q$, which we have shown earlier in \Cref{eq:psic} is possible, namely:
\begin{equation*}
      p_{jl}(A_{1,1},\ldots,B_{d^2,d}) = \begin{cases}
           1, \quad j = l \\
           \frac{1}{d + 1} \quad j \neq l.
       \end{cases}
    \end{equation*}
We then apply \Cref{theor:real} with $n = 2d^3$, $F=\R \cap \overline{\Q}$ and $R=\R$, which are both real closed fields, and $\Q\subseteq F$. 
By assumption there exists a solution in $R^n$, so there is a solution in $F^n$, that is there are $A_{j,k},B_{j,k}$ real algebraic for $1 \leq j \leq d^2, 1 \leq k \leq d$.
Hence the coefficients $u_{jk} = A_{j,k}+iB_{j,k}$ of $\bold{u}_j$ are algebraic.
\end{proof}

In words, coefficients appearing in unit vectors forming a SIC-POVM are found in number fields. We observe that the complex version of Hilbert's Nullstellensatz is not enough for our purpose, since even if we knew that we had real solutions in $\R^{2d^3}$ to the polynomials among the complex ones in $\C^{2d^3}$, we could only guarantee that a subset of those are in $\overline{\Q}^{2d^3}$, and this subset may or not a priori contain real algebraic solutions, that is be in $(\R \cap \Qbar)^{2d^3}$.

\begin{theorem}\label{thm:wh}
If there exists a construction of a Weyl-Heisenberg fiducial vector in $\C^d$, there exists a construction of a Weyl-Heisenberg fiducial vector in $\Qbar^d$.  
\end{theorem}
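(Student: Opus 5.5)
The argument mirrors the proof of \Cref{thm:existance_of_real_alg_lines_in_Cd}, with the system \Cref{eq:psic} replaced by the fiducial system \Cref{eq:fid}. A Weyl-Heisenberg fiducial vector $\bold{v}=(v_0,\ldots,v_{d-1})^T$ gives a real solution $(A_0,\ldots,A_{d-1},B_0,\ldots,B_{d-1})\in\R^{2d}$ of
\[
p_{a,b}(A_0,\ldots,B_{d-1}) - c_{a,b} = 0, \qquad (a,b)\in\Z_d^2,
\]
where $c_{0,0}=1$ and $c_{a,b}=\frac{1}{d+1}$ otherwise. The difference from the unstructured case is the coefficient field. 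The $p_{a,b}$ lie in $\Q(\cos(\tfrac{2\pi}{d}),\sin(\tfrac{2\pi}{d}))[A_0,\ldots,B_{d-1}]$, not in $\Q[\ldots]$. So before invoking \Cref{theor:real} I must check that these coefficients lie in the smaller real closed field $F=\R\cap\Qbar$.

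The key step is this. The numbers $\alpha_{bj}=\cos(\tfrac{2\pi bj}{d})$ and $\beta_{bj}=\sin(\tfrac{2\pi bj}{d})$ are real. They are also algebraic, because $\alpha_{bj}=(\omega_d^{bj}+\omega_d^{-bj})/2$ and $\beta_{bj}=(\omega_d^{bj}-\omega_d^{-bj})/(2i)$, and $\omega_d$ and $i$ are roots of unity, hence in $\Qbar$, which is a field. Therefore $\alpha_{bj},\beta_{bj}\in F$. Then $\bold{C}_{a,b}$ and $\bold{D}_{a,b}$ are polynomials with coefficients in $F$, and so are $p_{a,b}=\bold{C}_{a,b}^2+\bold{D}_{a,b}^2$ and $p_{a,b}-c_{a,b}$, since $c_{a,b}\in\Q\subseteq F$.

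Next I apply \Cref{theor:real} with $n=2d$, $F=\R\cap\Qbar$ and $R=\R$, both real closed as recalled earlier, and with $f_1,\ldots,f_{d^2}$ the polynomials $p_{a,b}-c_{a,b}$. A solution exists in $\R^{2d}$ by hypothesis, so the theorem gives one in $F^{2d}$, that is, real algebraic $A_k,B_k$. Set $v_k=A_k+iB_k\in\Qbar$. Since these values satisfy \Cref{eq:fid}, they satisfy \Cref{eq:wh_fid_cond}: the real and imaginary parts of $\innerproduct{\bold{v}_{a,b}}{\bold{v}}$ are exactly $\bold{C}_{a,b}$ and $\bold{D}_{a,b}$ evaluated at these values, by the derivation in \Cref{subsec:poly}. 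Hence $\bold{v}$ is a Weyl-Heisenberg fiducial vector in $\Qbar^d$. In addition, each $\bold{v}_{a,b}=\bold{V}^a\bold{U}^b\bold{v}$ has entries $\omega_d^{b(a+k)}v_{a+k}\in\Qbar$, so the whole SIC-POVM is algebraic.

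The only real obstacle is the coefficient-field check in the key step; the rest transfers verbatim from the previous proof. An alternative that avoids it is to introduce auxiliary real variables $x,y$ with $x^2+y^2=1$ and the polynomial relations forcing $x+iy$ to be a primitive $d$-th root of unity. That would keep the system over $\Q$, but it is unnecessary, since \Cref{theor:real} only requires coefficients in $F$.
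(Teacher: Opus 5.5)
Your proof is correct and takes the same route as the paper. Both arguments note that the coefficients of the fiducial system \eqref{eq:fid} lie in $\Q(\cos(\tfrac{2\pi}{d}),\sin(\tfrac{2\pi}{d}))\subseteq\R\cap\Qbar$ because $\omega_d$ is algebraic, and then apply \Cref{theor:real} with $n=2d$, $F=\R\cap\Qbar$, $R=\R$; your added checks, that the algebraic solution still satisfies \Cref{eq:wh_fid_cond} and that the whole orbit $\bold{V}^a\bold{U}^b\bold{v}$ is algebraic, are valid details the paper leaves implicit.
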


\begin{proof}
We represent a maximal equiangular set of lines generated by a fiducial vector and the Weyl-Heisenberg group $\bold{v}=(v_0,\ldots,v_{d - 1})^T$ as a solution to a system of polynomial equations over a subfield of $\R$, which we have shown earlier in \Cref{eq:fid} is possible, namely:
\begin{equation*}
      p_{a,b}(A_0,\ldots,A_{d-1},B_0,\ldots,B_{d-1}) = \begin{cases}
        1, \quad a = b = 0 \\
        \frac{1}{d + 1}, \quad (a, b) \in \Z_d^2/\{(0, 0)\}
    \end{cases}
    \end{equation*}
which forms a system of polynomial equations with coefficients in $\Q(\cos(\tfrac{2\pi}{d}), \sin(\tfrac{2\pi}{d}))$.
We then apply \Cref{theor:real} with $n = 2d, F=\R \cap \overline{\Q}$ and $R=\R$, which are both real closed fields, and $\Q(\cos(\tfrac{2\pi}{d}), \sin(\tfrac{2\pi}{d}))\subseteq F$ since $\cos(\tfrac{2\pi}{d}), \sin(\tfrac{2\pi}{d})$ are real algebraic. 
Indeed, the root of unity $e^{\tfrac{2\pi i}{d}}$ is algebraic, and hence its conjugate is as well, resulting in both $e^{\tfrac{2\pi i}{d}}+e^{-\tfrac{2\pi i}{d}}= 2 \cos(\tfrac{2\pi}{d})$ and $e^{\tfrac{2\pi i}{d}}-e^{-\tfrac{2\pi i}{d}}= 2i\sin(\tfrac{2\pi}{d})$ being algebraic. Coefficients $v_{j}$ of $\bold{v}$ are of the form $A_{j}+iB_{j}$, for $A_{j},B_{j}$ algebraic, thus $v_{j}$ is algebraic.
\end{proof}

In the real case, we have
\begin{equation*}
  p_{j,l}(u_{1,1},\ldots,u_{N,d}) = \begin{cases}
       1, \quad j = l \\
       \alpha \quad j \neq l.
   \end{cases}
\end{equation*}

To decide where the polynomials $p_{j,l} - \alpha$ lives, we first need to establish where $\alpha$ lives. For our purpose, it is enough to know that $\alpha \in \R \cap \Qbar$ so that we may apply \Cref{theor:real}.

\begin{lemma}
\label[lemma]{lemma:alpha_in_R_union_Qbar}
    For a set of real equiangular lines, with common angle $\alpha$, it must be that $\alpha \in \R \cap \Qbar$.
\end{lemma}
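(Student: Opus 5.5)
The plan is to deduce the lemma from the arithmetic constraint on the angle quoted at the end of \Cref{sec:polynomialequations}, namely \cite[Theorem 3.4]{LemmensSeidel}: if $N>2d$, then $1/\alpha$ is an odd integer. I read the lemma in the regime where that result applies, i.e.\ for $N>2d$ lines in $\R^d$. This is the regime relevant to maximal configurations, whose size approaches $\tfrac{d(d+1)}{2}$.

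First, fix a set $\{\bold{u}_1,\ldots,\bold{u}_N\}$ of unit vectors in $\R^d$ with $|\innerproduct{\bold{u}_j}{\bold{u}_l}|=\alpha$ for $j\neq l$, as in \Cref{eq:real_equiangular_line_condition}. Invoke \cite[Theorem 3.4]{LemmensSeidel} to get $1/\alpha=2m+1$ for some integer $m\geq 0$. Hence $\alpha=\tfrac{1}{2m+1}\in\Q$.

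Second, note that $\Q\subseteq \R\cap\Qbar$, since every rational $q$ is a real root of $X-q\in\Q[X]$. This gives $\alpha\in\R\cap\Qbar$.

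Third, check the quantity that actually enters \Cref{eq:realsystem}. There the right-hand side is written for $p_{j,l}=\bold{C}_{j,l}^2$, so strictly it is $\alpha^2$. But $\alpha^2$ is then also rational, hence in $\R\cap\Qbar$. So the polynomials $p_{j,l}-1$ and $p_{j,l}-\alpha^2$ lie in $(\R\cap\Qbar)[u_{1,1},\ldots,u_{N,d}]$, which is exactly what is needed to feed the system into \Cref{theor:real}.

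The main obstacle is the range $N\le 2d$, where \cite[Theorem 3.4]{LemmensSeidel} gives no information. Here the lemma cannot hold for an arbitrary configuration without the hypothesis $N>2d$. For $N\le d$ and any $\alpha\in[0,1)$, the Gram matrix $(1-\alpha)I_N+\alpha J_N$ is positive definite. It is therefore realized by unit vectors in $\R^N\subseteq\R^d$, and $\alpha$ may be chosen transcendental. So in the write-up I would state explicitly that the argument, and the lemma, are to be understood under the standing assumption $N>2d$, where the Lemmens--Seidel integrality does all the work.
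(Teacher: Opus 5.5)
Your deduction is correct for $N>2d$, but that is all it proves. Your claim that the lemma ``cannot hold'' without $N>2d$ is not justified, because your counterexample only covers $N\le d$. The right threshold is $N>d$, which is what the paper's proof assumes. For $d<N\le 2d$ the lemma is true, yet \cite[Theorem 3.4]{LemmensSeidel} says nothing there, and the rationality you derive can fail. The paper's own examples show this. The $6$ lines in $\R^3$ have $N=2d$ and $\alpha=1/\sqrt5\notin\Q$. The $3$ lines in $\R^2$ have $1/\alpha=2$, which is not odd. Nor is $N>2d$ the regime of maximal configurations in small dimension: the maximal sizes $3,6,6,10$ in $d=2,3,4,5$ all satisfy $N\le 2d$. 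So your proof misses cases that \Cref{theorem:existance_of_real_alg_sic_in_Rd} is meant to cover, including the paper's examples.

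The missing idea is an elementary rank argument, and it is the paper's proof.
\begin{itemize}
\item Write the Gram matrix as $G=I_N+\alpha S$, where $S$ is the symmetric $\{0,\pm1\}$-matrix of signs of the inner products.
\item If $N>d$, then $\Rank(G)\le d<N$, so $\det(I_N+\alpha S)=0$.
\item The polynomial $\det(I_N+tS)\in\Z[t]$ has constant term $1$, so it is nonzero and $\alpha$ is a real root of it. Hence $\alpha\in\R\cap\Qbar$; equivalently, $-1/\alpha$ is an eigenvalue of the integer matrix $S$.
\end{itemize}
The constant-term remark supplies the nonvanishing of this polynomial, which the paper leaves implicit.

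The Lemmens--Seidel integrality refines this same argument. For $N>2d$ the eigenvalue $-1/\alpha$ has multiplicity at least $N-d>N/2$, which forces it to be rational. Invoking that theorem therefore buys the stronger conclusion $\alpha\in\Q$, at the cost of the whole range $d<N\le 2d$.

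Two of your remarks are correct: some hypothesis such as $N>d$ is genuinely needed, and the right-hand side in \Cref{eq:realsystem} should be $\alpha^2$ rather than $\alpha$.
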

\begin{proof}
    Let $\{\bold{u}_1, \ldots, \bold{u}_N\}$ be a set of equiangular lines where $d < N \leq \binom{d + 1}{2}$ and $\innerproduct{\bold{u}_j}{\bold{u}_l} = \pm \alpha$. Now construct the Gram matrix 
    $$
    G = \begin{pmatrix}
    \bold{u}_1^T \\
    \vdots \\
    \bold{u}_N^T
    \end{pmatrix}
    \begin{pmatrix}
    \bold{u}_1 & \ldots & \bold{u}_N
    \end{pmatrix}
    $$ 
    using this set. We know that $\Rank(G) = \dim(\Span\{\bold{u}_1, \ldots, \bold{u}_N\}) \leq d < N$, hence, it must be that $\det(G) = 0$. The determinant is a polynomial in $\alpha$ over $\Z$, not necessarily monic, thus $\alpha \in \Qbar$. Furthermore it is real and hence $\alpha \in \R \cap \Qbar$.
\end{proof}

\begin{theorem}
\label{theorem:existance_of_real_alg_sic_in_Rd}
    If there exist constructions of $n$ equiangular lines in $\R^d$, there exist constructions of $n$ equiangular lines in $(\overline{\Q} \cap \R)^d$.
\end{theorem}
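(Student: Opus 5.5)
The plan is to mirror the proof of \Cref{thm:existance_of_real_alg_lines_in_Cd}: write the real equiangular configuration as a solution of the polynomial system \Cref{eq:realsystem}, check that all coefficients lie in the real closed field $F=\R\cap\Qbar$, and apply \Cref{theor:real} with $R=\R$.

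One normalization point comes first. In \Cref{eq:realsystem} the polynomials are $p_{j,l}=\bold{C}_{j,l}^2$, so the off-diagonal right-hand side should be read as $\alpha^2$, with $\alpha$ the common absolute inner product. I will therefore work with the system
\[
\bold{C}_{j,j}=1,\qquad \bold{C}_{j,l}^2-\alpha^2=0 \quad (j\neq l),
\]
in the $n=Nd$ variables $u_{1,1},\ldots,u_{N,d}$. A point $(u_{j,k})\in F^{Nd}$ solving this system gives unit vectors $\bold{u}_j\in F^d$ with $|\innerproduct{\bold{u}_j}{\bold{u}_l}|=\alpha$ for $j\neq l$, which is exactly a set of $N$ equiangular lines in $(\Qbar\cap\R)^d$. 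Note also that $\alpha<1$ is inherited from the original configuration, so the lines are distinct.

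The coefficients of the system are rational except for $\alpha^2$. I would treat the angle as follows.
\begin{itemize}
\item If $N\le d$, the problem is trivial: take $N$ standard basis vectors when $\alpha=0$. When $\alpha\neq 0$ I do not need this shortcut, since the argument below still applies once $\alpha$ is known to be algebraic.
\item If $N>d$, \Cref{lemma:alpha_in_R_union_Qbar} gives $\alpha\in\R\cap\Qbar=F$, hence $\alpha^2\in F$.
\end{itemize}
Either way, every polynomial in the system lies in $F[u_{1,1},\ldots,u_{N,d}]$. The given real configuration is a common zero in $\R^{Nd}$, and $F\hookrightarrow\R$ is an inclusion of real closed fields. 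So \Cref{theor:real} produces a common zero in $F^{Nd}$, and this completes the argument.

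The main obstacle is justifying that $\alpha$ is algebraic in every case, including $N\le d$ with $\alpha\neq0$, where \Cref{lemma:alpha_in_R_union_Qbar} does not apply because the Gram matrix need not be singular.

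My first attempt avoids needing the original $\alpha$ altogether. I would treat $\alpha$ (or $\beta=\alpha^2$) as an extra unknown, giving the system $\bold{C}_{j,j}-1=0$, $\bold{C}_{j,l}^2-\beta=0$ over $\Q$. Its real solutions still give equiangular configurations, and \Cref{theor:real} applies directly with $\Q\subseteq F$. Because the statement only asks for some set of $n$ equiangular lines, the angle is allowed to change.

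Two small points need care in this version:
\begin{itemize}
\item The new angle must satisfy $\beta<1$, so the $N$ lines stay distinct. I would impose this with an auxiliary variable $t$ and the equation $(1-\beta)t^2=1$.
\item The signs of the inner products are otherwise unconstrained, and that is fine because the system only fixes their squares.
\end{itemize}
With these additions, \Cref{theor:real} over $\Q\subseteq F$ handles all cases uniformly, so \Cref{lemma:alpha_in_R_union_Qbar} is used only if one wants to keep the original angle.
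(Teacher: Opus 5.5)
Your proof is correct. Its core tool is the same as the paper's: encode the configuration as a polynomial system and transfer a real solution to $F=\R\cap\Qbar$ via \Cref{theor:real}. Where you differ is in how the angle is handled.

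The paper keeps $\alpha$ as a fixed coefficient and invokes \Cref{lemma:alpha_in_R_union_Qbar} to place it in $F$. This gives a stronger conclusion, namely an algebraic configuration with the \emph{same} angle. However, that lemma is only proved for $N>d$, and for $N\le d$ it is false. The Gram matrix $(1-\alpha)I_N+\alpha J$ ($J$ the all-ones matrix) is positive definite for every $0\le\alpha<1$. So, for instance, two lines in $\R^2$ with $\alpha=1/\pi$ exist, whereas any configuration in $(\R\cap\Qbar)^d$ has algebraic inner products. The paper's fixed-angle argument therefore covers $N>d$ but not $N\le d$ with transcendental $\alpha$. In that case the theorem holds only because the angle is allowed to change, for example by taking an orthonormal basis.

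Your final version avoids this. Taking $\beta=\alpha^2$ as an unknown and adding $(1-\beta)t^2=1$ to force $\beta<1$ gives a system over $\Q$. It needs no lemma and works uniformly in $N$. The price is that the angle may change, which the statement permits and which is unavoidable when $N\le d$ and $\alpha$ is transcendental.

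You are also right that the off-diagonal right-hand side of \Cref{eq:realsystem} should be $\alpha^2$ rather than $\alpha$. The paper's ``$p_{j,l}-\alpha\in F$'' carries the same slip, though it is harmless for algebraicity.

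One tidy-up concerns your first bullet. For $N\le d$ the standard basis vectors work whatever the original $\alpha$ is. The phrase ``once $\alpha$ is known to be algebraic'' should be dropped, since for $N\le d$ it need not be.
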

\begin{proof}
    We represent a maximal equiangular set of lines $\{\bold{u}_1,\ldots,\bold{u}_{N}\}$ as a solution to a system of polynomial equations over $\Q$, which we have shown in \Cref{eq:realsystem} is possible, namely:
    \begin{equation*}
          p_{j, l}(u_{1,1},\ldots,u_{N,d}) = \begin{cases}
               1, \quad j = l \\
               \alpha \quad j \neq l.
           \end{cases}
        \end{equation*}
    We then apply \Cref{theor:real} with $n = Nd, F= \R \cap \overline{\Q}$ and $R= \R$, noting that $p_{j, l} - \alpha \in F$. This tells us that there exists a solution where $u_{j, k}$ are real algebraic for $1 \leq j \leq N, 1 \leq k \leq d$ as desired.
\end{proof}

%
%
%

\subsection{Gr\"obner bases viewpoint}

Given two vectors $\alpha = (\alpha_1, \ldots, \alpha_n), \beta = (\beta_1, \ldots, \beta_n)$ of positive integers,
they are ordered with respect to the lexicographic ordering, denoted by $\alpha >_{lex} \beta$, if the leftmost entry of the vector difference $\alpha - \beta > 0$. It will be assumed that we are using lexicographic ordering unless otherwise stated. Using the lexicographic order, for $f = \sum_{\alpha} a_{\alpha}x^{\alpha}$ a nonzero polynomial in $k[X_1, \ldots, X_n]$ for $k$ a field, its multidegree is by definition
\[
multideg(f) = \max\{\alpha \in \Z^n_{\geq 0} : a_{\alpha} \neq 0\}
\]
while its leading term is the product of the leading coefficient and the leading monomial, that is 
\[
LT(f) = LC(f) \cdot LM(f)
\]
where $LC(f) = a_{multideg(f)} \in k$ and $LM(f) = x^{multideg(f)}$.

We recall that a finite subset $G = \{g_1, \ldots, g_t\}$ of an ideal $\{0\} \neq I \subseteq k[X_1, \ldots, X_n]$ is said to be a Gr\"{o}bner basis if 
    \begin{equation*}
        \langle LT(g_1), \ldots, LT(g_t) \rangle = \langle LT(I) \rangle,
    \end{equation*}        
where $LT(I)$ is the set of leading terms of nonzero elements of $I$, that is
    \begin{equation*}
        LT(I) = \{cx^{\alpha} : \exists f \in I / \{0\}, \text{ such that } LT(f) = cx^{\alpha}\}
    \end{equation*}
for $\{0\} \neq I \subseteq k[X_1, \ldots, X_n]$.

Buchberger's Algorithm is a classical method to compute a Gr\"obner basis.

\begin{theorem}
\label{theorem:buchberger_algorithm}(\textbf{Buchberger's Algorithm} \cite[Chapter 2.7, Theorem 2]{cox2007}). 
    Let $I = \langle f_1, \ldots, f_s \rangle \neq \{0\}$ be a polynomial ideal. Then a Gr\"{o}bner basis can be constructed in finitely many steps by the following algorithm:
    \begin{algorithmic}[1]
    \Require $F = (f_1, \ldots, f_s)$
    \State $G = F$
    \Repeat
        \State $G' = G$
        \For{each pair $\{p, q\}, p \neq q$ in $G'$}
            \State $r = \overline{S(p, q)}^{G'}$
            \If{$r \neq 0$}
                \State $G = G \cup \{r\}$
            \EndIf
        \EndFor
    \Until{$G = G'$}
    \State \Output $G$
    \end{algorithmic}
where  
\begin{equation*}
S(p, q) = \frac{x^{\gamma}}{LT(p)}\cdot p- \frac{x^{\gamma}}{LT(q)} \cdot q
\end{equation*}
is the $S$-polynomial of $p, q$ and $\gamma = (\gamma_1, \ldots, \gamma_n)$, $\gamma_i = max(\alpha_i, \beta_i)$ with 
$\alpha = multideg(f), \beta = multideg(g)$ 
and $\overline{f}^F$ is the remainder when $f$ is divided by $F = (f_1, \ldots, f_s)$.
\end{theorem}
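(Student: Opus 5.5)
The claim to establish is the correctness of Buchberger's Algorithm: on input $F=(f_1,\ldots,f_s)$ generating $I\neq\{0\}$, it stops after finitely many steps and returns a Gr\"obner basis of $I$. The proof splits into three parts: (i) the set $G$ always lies in $I$ and generates $I$; (ii) the loop terminates; (iii) the final $G$ is a Gr\"obner basis. For (iii) I will use Buchberger's $S$-pair criterion: a finite generating set $G$ of $I$ is a Gr\"obner basis if and only if $\overline{S(p,q)}^{G}=0$ for all pairs $p\neq q$ in $G$. I will take this criterion as the standard result accompanying the cited theorem in \cite{cox2007}; if it has to be justified, I would argue via cancellation of leading terms in a standard representation.

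For (i), note first that $G$ starts equal to $F$, so it generates $I$. Each new element $r$ is the remainder of $S(p,q)$ on division by $G'\subseteq I$. Since $S(p,q)$ is a polynomial combination of $p$ and $q$, it lies in $I$. Since $r$ equals $S(p,q)$ minus a combination of elements of $G'$, it also lies in $I$. Hence $G\subseteq I$ throughout, and $G$ keeps generating $I$ because it only grows.

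For (ii), I would consider the monomial ideals $J'=\langle LT(G')\rangle$ and $J=\langle LT(G)\rangle$ after each pass. Suppose a pass adds some nonzero $r=\overline{S(p,q)}^{G'}$. By the defining property of the division algorithm, no term of $r$ is divisible by any $LT(g)$ with $g\in G'$. In particular $LT(r)\notin J'$, since membership in a monomial ideal means divisibility by one of its generators. So $J'\subsetneq J$ whenever the pass changes $G$. This produces a strictly ascending chain of ideals in $k[X_1,\ldots,X_n]$, which by the Hilbert Basis Theorem (the ascending chain condition) must stabilize. Therefore after finitely many passes $G=G'$ and the loop exits.

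For (iii), at exit every pair $p\neq q$ in $G=G'$ satisfies $\overline{S(p,q)}^{G}=0$, and combined with (i) the $S$-pair criterion shows $G$ is a Gr\"obner basis. The main obstacle is this criterion, if I cannot simply cite it. My approach would be as follows. Take $f\in I$ and write $f=\sum h_i g_i$ with $\delta=\max_i multideg(h_ig_i)$ chosen minimal among all representations. Suppose $\delta>multideg(f)$. Then the top-degree terms must cancel, and this cancelling sum can be rewritten as a combination of $S$-polynomials. Each such $S$-polynomial reduces to zero, which yields a representation with smaller $\delta$, contradicting minimality; here I use that the monomial order is a well-ordering. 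Hence $\delta=multideg(f)$, so $LT(f)$ is divisible by some $LT(g_i)$, and therefore $\langle LT(I)\rangle=\langle LT(G)\rangle$.
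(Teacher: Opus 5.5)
The paper gives no proof of its own here and simply cites \cite[Chapter 2.7, Theorem 2]{cox2007}. Your argument is exactly the standard proof in that reference, and it is correct: $G\subseteq I$ and $G$ generates $I$; the loop terminates because $\langle LT(G')\rangle\subsetneq\langle LT(G)\rangle$ whenever a pass adds an element, so the ascending chain condition applies; and the output is a Gr\"obner basis by Buchberger's $S$-pair criterion, since the final pass ends with $G=G'$.
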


The following fundamental theorem about Gr\"{o}bner basis is recalled for the sake of completeness.

\begin{theorem}
\label{theorem:elimination_theorem_grobner_basis}(\textbf{Elimination Theorem}) \cite[Chapter 3.1, Theorem 2]{cox2007}) 
    Let $I \subseteq k[X_1, \ldots, X_n]$ and let $G$ be a Gr\"{o}bner basis of $I$, then for all $1 \leq l \leq n$ the set
    \begin{equation*}
        G_l = G \cap k[X_{l + 1}, \ldots, X_n]
    \end{equation*}
    is a Gr\"{o}bner basis of the $l$-th elimination ideal $I_l= I \cap k[X_{l + 1}, \ldots, X_n] \in k[X_{l + 1}, \ldots, X_n]$.
\end{theorem}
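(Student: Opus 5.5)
The statement to prove is the Elimination Theorem: for a Gr\"obner basis $G$ of $I$ with respect to lex order $X_1 > \cdots > X_n$, the set $G_l = G \cap k[X_{l+1},\ldots,X_n]$ is a Gr\"obner basis of $I_l$. I will check directly the defining property recalled above, $\langle LT(G_l)\rangle = \langle LT(I_l)\rangle$ inside $k[X_{l+1},\ldots,X_n]$. One inclusion is trivial: $G_l \subseteq I \cap k[X_{l+1},\ldots,X_n] = I_l$, so every leading term of an element of $G_l$ is a leading term of an element of $I_l$. If $I_l=\{0\}$ the statement is read in the usual degenerate sense, so I assume $I_l \neq \{0\}$.

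For the reverse inclusion, take a nonzero $f \in I_l$. It suffices to show that $LT(f)$ is divisible by $LT(g)$ for some $g \in G_l$. Since $f \in I$ and $G$ is a Gr\"obner basis of $I$, we have $LT(f) \in \langle LT(g_1),\ldots,LT(g_t)\rangle$. This is a monomial ideal, so the monomial $LT(f)$ is divisible by some single $LT(g)$ with $g \in G$. I would justify this from the standard fact that a monomial lies in a monomial ideal iff it is divisible by one of the generators, which follows by expanding $\sum h_i LT(g_i)$ term by term.

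The key step, and the only place where lex order matters, is showing that such a $g$ actually lies in $k[X_{l+1},\ldots,X_n]$. Because $LT(f)$ involves only $X_{l+1},\ldots,X_n$ and $LT(g)$ divides it, $LT(g)$ also involves only those variables. Now in lex order, any monomial containing one of $X_1,\ldots,X_l$ with positive exponent is strictly greater than every monomial in $X_{l+1},\ldots,X_n$ alone. Indeed, the leftmost nonzero entry of the exponent difference then lies in the first $l$ coordinates and is positive. Hence, if $g$ had any term involving $X_1,\ldots,X_l$, that term would exceed $LM(g)$, contradicting the choice of $LM(g)$ as the maximum. Therefore $g \in G \cap k[X_{l+1},\ldots,X_n] = G_l$.

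This gives $LT(f) \in \langle LT(G_l)\rangle$, hence $\langle LT(I_l)\rangle \subseteq \langle LT(G_l)\rangle$, and the two ideals are equal. In particular $G_l$ is nonempty and finite, so it is a Gr\"obner basis of $I_l$. I expect no serious obstacle; the main point to handle carefully is the lex-order comparison in the previous paragraph.
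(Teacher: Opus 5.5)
Your proof is correct and follows the standard argument from the cited source (Cox--Little--O'Shea, Chapter~3, \S1, Theorem~2); the paper quotes this theorem without giving its own proof. As in that source, $G_l\subseteq I_l$ gives one inclusion, and for nonzero $f\in I_l$ the lex order forces any $g\in G$ with $LT(g)\mid LT(f)$ to lie in $k[X_{l+1},\ldots,X_n]$, hence in $G_l$.
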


Let $k$ be a field and $K$ be an algebraically closed extension of $k$. 
Let $I_k=\langle f_1,\ldots,f_s\rangle$ be the ideal generated over $k[X_1,\ldots,X_n]$ by 
$f_1,\ldots,f_s\in k[X_1,\ldots,X_n]$
and $I_K$ be the ideal generated over $K[X_1,\ldots,X_n]$ by 
$f_1,\ldots,f_s\in k[X_1,\ldots,X_n]$, that is, given  $f_1,\ldots,f_s\in k[X_1,\ldots,X_n]$
\begin{align}
\begin{split}
    \label{eq:Ik}
    I_k &= \{\sum g_i f_i : g_i \in k[X_1, \ldots, X_n]\}
\end{split} \\
\begin{split}
    \label{eq:IK}
    I_K &= \{\sum g_i f_i : g_i \in K[X_1, \ldots, X_n]\}.    
\end{split}
\end{align}

By definition, their respective varieties are given by
\begin{align*}
 \V_K(I_k) &=
 \{\bold{a} = (a_1, \ldots, a_n) \in K^n : h(\bold{a}) = 0 \mbox{ for all }h \in I_k\} \\
 \V_K(I_K) &=
 \{\bold{a} = (a_1, \ldots, a_n) \in K^n : h(\bold{a}) = 0 \mbox{ for all }h \in I_K\}. 
\end{align*}
 
We show that extending the base field of the ideal $I$ does not change the resulting variety.

\begin{lemma}
\label[lemma]{lemma:VkK}
    We have $\V_K(I_K) = \V_K(I_k)$.
\end{lemma}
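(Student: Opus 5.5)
The plan is to prove the two inclusions $\V_K(I_K) \subseteq \V_K(I_k)$ and $\V_K(I_k) \subseteq \V_K(I_K)$ directly from the descriptions \eqref{eq:Ik} and \eqref{eq:IK}. No Nullstellensatz is needed; the whole argument rests on the observation that a point of $K^n$ vanishing on the generators $f_1,\ldots,f_s$ vanishes on every combination of them, whatever field the coefficients $g_i$ come from.

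For the first inclusion, I note that $k[X_1,\ldots,X_n] \subseteq K[X_1,\ldots,X_n]$, so every element $\sum g_i f_i$ with $g_i \in k[X_1,\ldots,X_n]$ is also of the form allowed in \eqref{eq:IK}. Hence $I_k \subseteq I_K$. A point $\bold{a} \in K^n$ on which every $h \in I_K$ vanishes therefore vanishes on every $h \in I_k$, which gives $\V_K(I_K) \subseteq \V_K(I_k)$.

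For the reverse inclusion, I take $\bold{a} \in \V_K(I_k)$. Each $f_i = 1\cdot f_i$ lies in $I_k$, so $f_i(\bold{a}) = 0$ for $i = 1,\ldots,s$. Now let $h = \sum g_i f_i \in I_K$ with $g_i \in K[X_1,\ldots,X_n]$. Evaluation at $\bold{a}$ is a ring homomorphism $K[X_1,\ldots,X_n] \to K$, so
\begin{equation*}
h(\bold{a}) = \sum_{i=1}^s g_i(\bold{a}) f_i(\bold{a}) = 0,
\end{equation*}
and thus $\bold{a} \in \V_K(I_K)$.

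The only point needing care is making explicit that the evaluation is taking place in $K$: since $\bold{a} \in K^n$ and the $g_i$ have coefficients in $K$, evaluation at $\bold{a}$ is a homomorphism into $K$ and the product rule above is legitimate. This is the main (though mild) obstacle; beyond it the argument is routine. Along the way it also yields the useful reformulation that both varieties equal the common zero set in $K^n$ of $f_1,\ldots,f_s$.
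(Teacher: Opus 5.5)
Your proof is correct and follows the paper's argument: $I_k \subseteq I_K$ gives $\V_K(I_K) \subseteq \V_K(I_k)$, and a point of $\V_K(I_k)$ annihilates each $f_i$, hence every combination $\sum g_i f_i$ with $g_i \in K[X_1,\ldots,X_n]$. Your remark that evaluation at $\bold{a}$ is a ring homomorphism into $K$ makes explicit a point the paper leaves implicit.
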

\begin{proof}
    If $\bold{a} \in \V_K(I_k)$, this implies $f_i(\bold{a}) = 0$ for all $f_i$, $i=1,\ldots,s$, and for $h \in I_K$ we have $h = \sum g_i f_i$, which implies $h(\bold{a}) = 0$, that is $\V_K(I_k) \subseteq \V_K(I_K)$. The other direction is clear since $I_k \subseteq I_K$ implies $\V_K(I_k) \supseteq \V_K(I_K)$.
\end{proof}
Furthermore, extending the base field of the ideal $I$ does not change the Gr\"{o}bner basis of the elimination ideals.
\begin{proposition}
\label[proposition]{prop:elimination_ideal_for_tower_of_fields_are_equal}
    Let $k$ be a field and $K$ be a field extension of $k$. Let $I = \langle f_1, \ldots, f_s \rangle$ where $f_i \in k[X_1, \ldots, X_n]$ for all $i$, then the Gr\"{o}bner basis of the elimination ideal of $I$ in $k$ and in $K$ are equal, and it is $G \cap k[X_{l + 1}, \ldots, X_n]$, where $G$ is the Gr\"{o}bner basis of $I_k$, that is
    \begin{equation*}
        I_k \cap k[X_{l + 1}, \ldots, X_n] = I_K \cap K[X_{l + 1}, \ldots, X_n] = \langle G \cap k[X_{l + 1}, \ldots, X_n] \rangle.
    \end{equation*}
\end{proposition}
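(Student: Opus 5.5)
The plan is to exploit the fact that Buchberger's algorithm (\Cref{theorem:buchberger_algorithm}) only performs field operations on the coefficients of its input. Run it on $F=(f_1,\ldots,f_s)$ with all $f_i\in k[X_1,\ldots,X_n]$. Every step uses only coefficients already present: forming $S$-polynomials, which needs division by leading coefficients, and dividing by the current list $G'$. So every polynomial produced lies in $k[X_1,\ldots,X_n]$. The run is literally identical whether we regard the input as living over $k$ or over $K$, since the same leading terms and the same remainders arise. Hence the output $G\subseteq k[X_1,\ldots,X_n]$ is simultaneously a Gr\"obner basis of $I_k$ over $k$ and of $I_K$ over $K$.

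For the second claim I would rather avoid the algorithm and use the Buchberger criterion. A generating set whose $S$-polynomials all reduce to $0$ modulo $G$ is a Gr\"obner basis. This reduction is computed inside $k[X]$ and is unchanged over $K$. Together with $\langle G\rangle_K=I_K$, which holds because $G$ generates $I_k$ and hence its $K$-span is $I_K$, this gives the second statement.

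Next, apply the Elimination Theorem (\Cref{theorem:elimination_theorem_grobner_basis}) twice with the same $G$. Over $k$, the set $G\cap k[X_{l+1},\ldots,X_n]$ is a Gr\"obner basis of $I_k\cap k[X_{l+1},\ldots,X_n]$. Over $K$, the set $G\cap K[X_{l+1},\ldots,X_n]$ is a Gr\"obner basis of $I_K\cap K[X_{l+1},\ldots,X_n]$. Since $G\subseteq k[X]$, these two sets coincide, giving the common basis $G_l=G\cap k[X_{l+1},\ldots,X_n]$.

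Finally, interpret the displayed equality. The ideal $I_k\cap k[X_{l+1},\ldots,X_n]$ is the $k$-span of $G_l$, and $I_K\cap K[X_{l+1},\ldots,X_n]$ is the $K$-span of $G_l$. Read literally as sets, the middle term contains the first, but the containment is strict when $K\neq k$. So I would state the result as: the elimination ideal over $K$ is generated by the same finite set $G_l$ as the one over $k$, i.e.\ it is the extension of the $k$-elimination ideal. The intersection with $k[X_{l+1},\ldots,X_n]$ of the $K$-one equals the $k$-one. To see this, take $h\in I_K\cap k[X]$ and divide it by $G_l$ over $k$. By uniqueness of the remainder with respect to a Gr\"obner basis over $K$, the remainder is $0$, so $h\in I_k$.

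The main obstacle is the field-independence of Buchberger's computation, specifically that the remainder upon division is the same over both fields. I would argue this by noting that the division algorithm is deterministic and uses only coefficient arithmetic in the field generated by the inputs.
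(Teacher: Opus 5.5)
Your argument is correct and is the paper's: Buchberger's algorithm never leaves $k[X_1,\ldots,X_n]$, so one and the same $G$ is a Gr\"obner basis of both $I_k$ and $I_K$, and the Elimination Theorem over each field yields the common basis $G\cap k[X_{l+1},\ldots,X_n]$. Your Buchberger-criterion variant is just a tidier justification of the first step.

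Your reading of the display is also right, and the paper's proof does not address it, since it only establishes equality of the bases. As sets, the two elimination ideals differ as soon as $K\neq k$ and the elimination ideal is nonzero; that second condition is needed for your ``strict''. The accurate content is equality of generators together with your contraction identity $I_K\cap k[X_{l+1},\ldots,X_n]=I_k\cap k[X_{l+1},\ldots,X_n]$.
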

\begin{proof}
    Note that the Gr\"{o}bner basis $G$ which is computed by  \Cref{theorem:buchberger_algorithm} lives in the same ring $R$ as the polynomials $f_1,\ldots,f_s$, because the computations involve only operations which are closed in $R$. Since $f_i \in R=k[X_1, \ldots, X_n]$, we will get the same Gr\"obner basis  
    for both $I_k$ and $I_K$ since it will live in $k[X_1, \ldots, X_n]$. Now by \Cref{theorem:elimination_theorem_grobner_basis}, we have that the Gr\"{o}bner basis of $(I_l)_k$ is equal to $G \cap k[X_{l + 1}, \ldots, X_n]$ and that the Gr\"{o}bner basis of $(I_{l})_K$ is equal to $G \cap K[X_{l + 1}, \ldots, X_n]= G\cap k[X_{l + 1}, \ldots, X_n]$ since the coefficients of the polynomials in $G$ are in $k$, and hence the bases are equal.
\end{proof}

The next result seems to be considered as common knowledge by specialists of the topic, but in the absence of an explicit proof to refer to, and for making the result available to a more general audience, we provide two proofs.

\begin{theorem}(\textbf{Finite variety implies algebraic})
\label{theorem:finite_variety_implies_algebraic}
    Let $k$ be a field, $\bar{k}$ be its algebraic closure, and $K$ be an algebraically closed extension of $\overline{k}$. Let $I := I_K$ be the ideal generated over $K[X_1,\ldots,X_n]$ by 
$f_1,\ldots,f_s\in k[X_1,\ldots,X_n]$ . If $|\V_K(I)| < \infty$, then $\V_K(I) \subseteq \overline{k}^n$. Said otherwise, if $\bold{a} = (a_1, \ldots, a_n) \in \V_K(I)$ for $\V_K(I)$ finite, then $a_i$ is algebraic over $k$ for all $1 \leq i \leq n$.
\end{theorem}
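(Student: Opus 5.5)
The plan is to use the elimination machinery just set up, namely \Cref{theorem:elimination_theorem_grobner_basis} together with \Cref{prop:elimination_ideal_for_tower_of_fields_are_equal}. The goal is to show that for each coordinate $X_i$ there is a nonzero polynomial $h_i \in k[X_i]$ lying in $I$. Once we have this, any $\bold{a} \in \V_K(I)$ satisfies $h_i(a_i)=0$, so $a_i$ is a root of a nonzero one-variable polynomial with coefficients in $k$. Hence $a_i \in \overline{k}$.

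To produce $h_i$, I would use the finiteness of the variety. Write $\V_K(I)=\{\bold{a}^{(1)},\ldots,\bold{a}^{(m)}\}$; if $m=0$ there is nothing to prove. Fix $i$ and consider
\[
q_i(X_i)=\prod_{j=1}^m \bigl(X_i - a^{(j)}_i\bigr) \in K[X_i].
\]
This polynomial vanishes on all of $\V_K(I)$, and $K$ is algebraically closed. By the strong Nullstellensatz over $K$ (the usual consequence of \Cref{theorem:weak_nss_cox} via the Rabinowitsch trick), some power $q_i^{r}$ lies in $I=I_K$. Therefore $I_K \cap K[X_i] \neq \{0\}$.

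The remaining step is to pass from $K$ to $k$. Reorder the variables so that $X_i$ comes last in the lex order; this is harmless since the argument is applied separately for each $i$. Then \Cref{prop:elimination_ideal_for_tower_of_fields_are_equal} with $l=n-1$ gives
\[
I_K \cap K[X_i] = \langle G \cap k[X_i]\rangle,
\]
where $G$ is a Gröbner basis computed in $k[X_1,\ldots,X_n]$. Since the left-hand side is nonzero, $G \cap k[X_i]$ contains a nonzero polynomial $h_i$ with coefficients in $k$, which is exactly what is needed.

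I expect the main obstacle to be the Nullstellensatz step, since only the weak form is stated earlier in the paper. I would derive the needed version directly. Introduce a new variable $Y$ and the polynomials $f_1,\ldots,f_s,\,1-Yq_i$. These have no common zero in $K^{n+1}$, so by \Cref{cor:weak_nss_cor_commonzero_or_certificatefor1} we have $1=\sum g_jf_j + g(1-Yq_i)$. Substituting $Y=1/q_i$ and clearing denominators gives $q_i^r\in I_K$.

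An alternative route avoids radicals. For each $j$ with $a^{(j)}_i$ transcendental over $k$, every $k$-automorphism of $K$ fixes $\V_K(I)$ setwise, because the $f_i$ have coefficients in $k$. A transcendental element has infinitely many images under such automorphisms, at least when $K$ has infinite transcendence degree over $k$, and this contradicts finiteness. I would keep the Gröbner-based argument as the primary proof, since it matches the paper's tools.
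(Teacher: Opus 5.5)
Your argument is correct and is essentially the paper's second proof (the one suggested by Markus Grassl). Both produce a nonzero univariate polynomial in $I_K \cap K[X_i]$ for each $i$, descend it to $k[X_i]$ via \Cref{prop:elimination_ideal_for_tower_of_fields_are_equal} with $X_i$ placed last in the lex order, and conclude that each $a_i$ is a root of a nonzero polynomial over $k$. The only difference is how the univariate polynomial is obtained: the paper cites \cite[Lemma 6.50]{Becker1993}, whereas you derive it from \Cref{cor:weak_nss_cor_commonzero_or_certificatefor1} via the Rabinowitsch trick applied to $q_i=\prod_j (X_i-a_i^{(j)})$, which makes the proof self-contained.
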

\begin{proof}
    We prove the contrapositive, that is, if there is a solution in $K^n$ with at least one coordinate transcendental over $k$, then there are infinitely many solutions in $\V_K(I)$. Suppose $\bold{a} = (a_1, \ldots, a_n) \in \V_K(I)$, and assume $a_n$ is transcendental over $k$. The choice of the last coordinate is done without loss of generality, since if another coordinate were to be transcendental, we could change the order (it is currently the lex order) to have this coordinate in the last position. 
    Let $(I_{n - 1})_k := I_{k} \cap k[X_n]$ 
    be the $(n - 1)$-th elimination ideal for $I_k$ defined in \Cref{eq:Ik}. Now we prove the theorem in several steps:
    
    \textbf{Claim 1:} $(I_{n - 1})_k = (0)$.
    
    \textbf{Proof 1:} Let $f \in (I_{n - 1})_k$, we have that $f \in k[X_n]$, hence $f(\bold{a}) = f(a_n) = 0$, but since $a_n$ is transcendental over $k$, it must be that $f = 0$, that is $(I_{n - 1})_k = (0)$.

    Let
    $I_{n - 1} := (I_{n - 1})_K = I \cap K[X_n] $ 
    be the $(n - 1)$-th elimination ideal for $I_K$ defined in \Cref{eq:IK}. 

\textbf{Claim 2:} $I_{n - 1}  = (0)$.

\textbf{Proof 2:}
    \Cref{prop:elimination_ideal_for_tower_of_fields_are_equal} tells us that $(I_{n - 1})_k$ and $I_{n - 1}$ have the same Gr\"{o}bner basis, and since we know that $(I_{n - 1})_k = (0)$, we have the desired claim.

Now $I_{n - 1}  = (0)$ implies that there is no non-constant univariate polynomial for $X_n$. It is known that $I$ is $0$-dimensional if and only if it contains a non-constant univariate polynomial for each variable \cite[Lemma 6.50]{Becker1993}. Hence $I$ is not $0$-dimensional and $V_K(I)$ is not finite since $\V_K(I)$ is finite if and only if $I$ is a $0$-dimensional ideal.

\end{proof}

Below is an alternate proof of \Cref{theorem:finite_variety_implies_algebraic}, suggested by Markus Grassl:
\begin{proof}
    We have that $\V_K(I)$ is finite if and only if $I$ is a $0$-dimensional ideal. It is known that $I$ is $0$-dimensional if and only if it contains a non-constant univariate polynomial for each variable \cite[Lemma 6.50]{Becker1993}, that is, for $1 \leq i \leq n$, there is a $f_i(X_i) \in I \cap K[X_i]$. By picking a suitable ordering and then applying \Cref{prop:elimination_ideal_for_tower_of_fields_are_equal}, $I \cap K[X_i] = I \cap k[X_i]$. Hence $f_i(X_i) \in I \cap k[X_i]$. Then by the properties of a variety, $\V_K(I \cap k[X_i]) \subseteq \V_K(f_i)$,
    and since $f_i$ is univariate, we have $\V_K(f_i) = \V_{\overline{k}}(f_i)$. Therefore, if $\bold{a} = (a_1, \ldots, a_n) \in \V_K(I)$, then since $f_i \in I$, we have $f_i(a_i) = 0$ and hence $a_i \in \V_{\overline{k}}(f_i)$ which implies that $a_i$ is algebraic over $k$ for all $1 \leq i \leq n$.
\end{proof}

From \Cref{eq:psic} and \Cref{eq:fid}, we know that unit vectors and fiducial vectors representing $d^2$ equiangular lines in $\C^d$ can be represented as solutions to a system of polynomial equations with coefficients in $k$, for $k=\Q$, respectively $k=\Q(\cos(\tfrac{2\pi}{d}), \sin(\tfrac{2\pi}{d}))$. Let $I_k$ be the ideal generated over $k$ by the aforementioned system of polynomial equations.
Then, when $K = \C$, $\V_K(I_k)$ contains by definition common solutions to the system of polynomial equations. By \Cref{lemma:VkK},  we know that $\V_K(I_k)=\V_K(I_K)$, and by \Cref{theorem:finite_variety_implies_algebraic}, showing that $\V_K(I_K)$ is finite would imply that the solutions are algebraic. 

Let us compute an example for $d=4$ (numerical computations were done using both SAGEmath \cite{sagemath} and MAGMA \cite{magma}). We observe that the polynomial system defined by \Cref{eq:psic} involves $2d^3$ real variables while the polynomial system defined by \Cref{eq:fid} only contains $2d$ real variables. We choose the latter, corresponding to looking for a fiducial vector $\bold{v}$.    
\begin{example}
For example, when $a=b=0$, we have
\begin{align*}
p_{0,0} &= (\sum_{j=0}^3 C_j^{(0)}\alpha_{0j}-D_j^{(0)}\beta_{0j})^2 + (\sum_{j=0}^3 C_j^{(0)}\beta_{0j}+D_j^{(0)}\alpha_{0j})^2 \\ 
    &= (\sum_{j=0}^3 C_j^{(0)})^2 + (\sum_{j=0}^3 D_j^{(0)})^2, \quad \text{Since $\alpha_{0j} = 1, \beta_{0j} = 0$} \\
    &= (\sum_{j=0}^3 A_j^{2}+B_j^2)^2, \quad 
    \text{Since $D_j^{(0)} = 0$}\\ 
    &= (x_0^2+x_1^2+x_2^2+x_3^2+x_4^2+x_5^2+x_6^2+x_7^2)^2
\end{align*}
by setting
\begin{equation*}
    \bold{v} = \begin{pmatrix}
        x_0 \\
        \vdots \\
        x_3
    \end{pmatrix} + i\begin{pmatrix}
        x_4 \\
        \vdots \\
        x_7
    \end{pmatrix},
\end{equation*}
yielding the first polynomial equation $p_{00}=1$. 

Repeating the same computations for $(a,b)\neq (0,0)$, we finally get $9$ distinct polynomials
$p_{a, b} $ with coefficients in $ \Q(\cos(2\pi /4), \sin(2\pi /4)) = \Q$ satisfying
\begin{equation*}
   p_{a,b}(x_0,\ldots,x_{7}) = \begin{cases}
        1, \quad a = b = 0 \\
        \frac{1}{5}, \quad (a, b) \in \Z_4^2/\{(0, 0)\}.
    \end{cases}
\end{equation*}
These polynomials are:
\begin{align*}
    p_{0, 0} &= (x_{0}^{2} + x_{1}^{2} + x_{2}^{2} + x_{3}^{2} + x_{4}^{2} + x_{5}^{2} + x_{6}^{2} + x_{7}^{2})^2 \\
    p_{0, 1} = p_{0, 3} &= (x_{0}^{2} - x_{2}^{2} + x_{4}^{2} - x_{6}^{2})^{2} + (x_{1}^{2} - x_{3}^{2} + x_{5}^{2} - x_{7}^{2})^{2} \\
    p_{0, 2} &= (x_{0}^{2} - x_{1}^{2} + x_{2}^{2} - x_{3}^{2} + x_{4}^{2} - x_{5}^{2} + x_{6}^{2} - x_{7}^{2})^{2} \\
    p_{1, 0} = p_{3, 0} &= (x_{0} x_{1} + x_{1} x_{2} + x_{0} x_{3} + x_{2} x_{3} + x_{4} x_{5} + x_{5} x_{6} + x_{4} x_{7} + x_{6} x_{7})^{2} + \\
    &(x_{1} x_{4} - x_{3} x_{4} - x_{0} x_{5} + x_{2} x_{5} - x_{1} x_{6} + x_{3} x_{6} + x_{0} x_{7} - x_{2} x_{7})^{2} \\
    p_{1, 1} = p_{3, 3} &= (x_{0} x_{1} - x_{2} x_{3} + x_{3} x_{4} + x_{2} x_{5} + x_{4} x_{5} - x_{1} x_{6} - x_{0} x_{7} - x_{6} x_{7})^{2} + \\
    &(x_{1} x_{2} - x_{0} x_{3} - x_{1} x_{4} + x_{0} x_{5} + x_{3} x_{6} + x_{5} x_{6} - x_{2} x_{7} - x_{4} x_{7})^{2} 
\end{align*}
\begin{align*}    
    p_{1, 2} = p_{3, 2} &= (x_{0} x_{1} - x_{1} x_{2} - x_{0} x_{3} + x_{2} x_{3} + x_{4} x_{5} - x_{5} x_{6} - x_{4} x_{7} + x_{6} x_{7})^{2} + \\
    &(x_{1} x_{4} + x_{3} x_{4} - x_{0} x_{5} - x_{2} x_{5} + x_{1} x_{6} + x_{3} x_{6} - x_{0} x_{7} - x_{2} x_{7})^{2} \\
    p_{1, 3} = p_{3, 1} &= (x_{0} x_{1} - x_{2} x_{3} - x_{3} x_{4} - x_{2} x_{5} + x_{4} x_{5} + x_{1} x_{6} + x_{0} x_{7} - x_{6} x_{7})^{2} + \\
    &(x_{1} x_{2} - x_{0} x_{3} + x_{1} x_{4} - x_{0} x_{5} - x_{3} x_{6} + x_{5} x_{6} + x_{2} x_{7} - x_{4} x_{7})^{2}\\
    p_{2, 0} &= 4(x_{0} x_{2} + x_{1} x_{3} + x_{4} x_{6} + x_{5} x_{7})^{2}\\
    p_{2, 1} = p_{2, 3} &= 4(x_{2} x_{4} - x_{0} x_{6})^{2} + 4(x_{3} x_{5} - x_{1} x_{7})^{2} \\
    p_{2, 2} &= 4(x_{0} x_{2} - x_{1} x_{3} + x_{4} x_{6} - x_{5} x_{7})^{2}
\end{align*}

The inner product does not change
if we multiply $\bold{v}$ by a complex number of modulus one. This implies not only that, without loss of
generality, we may choose the first coordinate of $\bold{v}$ to be real, that is, $x_4 = 0$, but more importantly that if we do not fix this condition, we will have a degree of freedom that will yield infinitely many solutions. Hence, we add
\[
p_0 = x_4.
\]
 
We get 11 polynomial equations in total. Let $I$ be the ideal generated by these polynomials, that is 
\begin{equation}
\label{eq:ideal_gen_by_wh_zauner}
    I = \langle p_0, p_{0, 0} - 1, p_{a, b} - \frac{1}{5} \mbox{ for } 
    (a, b) \in \Z_4^2/\{(0, 0)\} 
    \rangle \subset \Q[x_0, \ldots, x_7].
\end{equation}

We compute a Gr\"{o}bner basis of $I$  using the lex order, and with a numerical solver, we get 1024 vectors $(x_0, \ldots, x_7)$, which means that the variety $V_{\C}(I)$ has dimension $0$. 

One of the solutions is (written with a finite number of digits)
\begin{equation*}
    \bold{v} = \begin{pmatrix}
        0.48571221409126403909152153177 \\ 0.60043369656069688700611847041 + 0.44989636690811813902417022753i \\
        0.20118858648686589293456281597i \\ -0.39924511007383099407155565445 + 0.035815847183145900067351304236i
    \end{pmatrix}.
\end{equation*}

Now by \Cref{theorem:finite_variety_implies_algebraic}, we know that this solution  (as well as all other solutions) is algebraic, and indeed 
\begin{equation*}
    \bold{v} = e^{-i\pi /8}(X \psi_{1a} + \omega_8 Y\psi_{1b})
\end{equation*}
where
\begin{equation*}
    X = \frac{\sqrt{3 - \frac{3}{\sqrt{5}}}}{2}, \quad Y = \frac{\sqrt{1 + \frac{3}{\sqrt{5}}}}{2}, \quad
    \psi_{1a} = \frac{1}{\sqrt{6}} \begin{pmatrix}
        \omega_8 + 1 \\
        i \\
        \omega_8 - 1 \\
        i
    \end{pmatrix}, \quad \psi_{1b} = \frac{1}{\sqrt{2}} \begin{pmatrix}
        0 \\
        1 \\
        0 \\
        -1
    \end{pmatrix}.
\end{equation*}

More precisely, let 
\begin{equation*}
    \bold{v}_k = e^{-i\pi /8}(X \psi_{1a} + \omega_8^k Y\psi_{1b}).
\end{equation*}
Then 4 solutions are $\bold{v}_k$ for $k = 1, 3, 5, 7$.
Unsurprisingly, $\bold{v}_k$ is a phase shift of a fiducial vector found by Zauner, given by $\psi_k = e^{i\pi /8}\bold{v}_k$ \cite[p.~62]{zauner_2011}. The details of the solutions are reported in \Cref{tab:d4allsolutions}.

\begin{table}[h]
\centering
\begin{tabular}{|p{2.8cm}|p{7cm}|c|}
\hline
$\#$ of solutions & As computed by SAGEmath/MAGMA. & 1024 \\
\hline
$\#$ of real solutions & As computed by SAGEmath/MAGMA  & 512 \\
\hline
$\#$ of real solutions up to sign   & To each real solution $(x_0,\ldots,x_7)$ corresponds the real solution $(-x_0,\ldots,-x_7)$.   & 256 \\
\hline
$\#$ of fiducial vectors  & To each fiducial vector $\bold{v}$ corresponds 16 fiducial vectors $\bold{V}^a\bold{U}^b\bold{v}$. Also $-\bold{v}\neq \bold{V}^a\bold{U}^b\bold{v}$. This number matches \cite{scott_grassl_2009}   &  16   \\
\hline
$\#$ of Zauner fiducial vectors & These are $\bold{v}_k$ (up to a phase shift), $k=1,3,5,7$ & 4 \\
\hline
\end{tabular}
\caption{\label{tab:d4allsolutions}
A detail of all solutions to the system of polynomial equations for Weyl-Heisenberg fiducial vectors for $d=4$.
}
\end{table}

\end{example}
%
%
%

\section{Consequences and Discussions}
\label{sec:consequences}

We observe that the results from  \Cref{thm:existance_of_real_alg_lines_in_Cd} and \Cref{thm:wh} provide a theoretical basis for studying constructions of SIC-POVMs and fiducial vectors, respectively, using number fields, regardless of the dimension $d$.
We derive some further consequences.

\subsection{About normalized overlaps}

We recall from \Cref{eq:condition_for_maximal_set_of_equiangular_lines} that
\begin{equation}
  |\innerproduct{\bold{u}_j}{\bold{u}_l}|^2 = \begin{cases}
           1, \quad j = l \\
           \frac{1}{d + 1} \quad j \neq l.
       \end{cases}
\end{equation}
This implies that
\begin{equation}\label{eq:normalizedoverlap}
\innerproduct{\bold{u}_j}{\bold{u}_l} = \begin{cases}
           e^{i\theta_{jj}}, \quad j = l \\
           \frac{e^{i\theta_{jl}}}{\sqrt{d + 1}} \quad j \neq l,
       \end{cases}
\end{equation}
where $e^{i\theta_{jl}}$ is called the phase factor in general, and in the particular case of Weyl-Heisenberg covariant SIC-POVMs, it is called the normalized overlap.

\begin{proposition}
\label[proposition]{prop:existance_of_alg_phase_factors}
\begin{enumerate}
\item 
If there exists a construction of $d^2$ equiangular lines in $\C^d$, there exists a construction whose phase factors $e^{i\theta_{jl}}$ are algebraic numbers.
\item 
If there exists a construction of a Weyl-Heisenberg vector in $\C^d$, there exists a construction whose corresponding normalized overlaps $e^{i\theta_{jl}}$ are algebraic numbers.
\end{enumerate}  
\end{proposition}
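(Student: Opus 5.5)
The plan is to reduce both parts to \Cref{thm:existance_of_real_alg_lines_in_Cd} and \Cref{thm:wh}, and then use that $\Qbar$ is a field closed under complex conjugation and under taking square roots.

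For part 1, assume a construction of $d^2$ equiangular lines in $\C^d$ exists. By \Cref{thm:existance_of_real_alg_lines_in_Cd} there is then a construction $\{\bold{u}_1,\ldots,\bold{u}_{d^2}\}$ with all coefficients $u_{jk}=A_{j,k}+iB_{j,k}$ in $\Qbar$, where $A_{j,k},B_{j,k}\in\R\cap\Qbar$. The inner product is
\[
\innerproduct{\bold{u}_j}{\bold{u}_l}=\sum_k u_{jk}\overline{u_{lk}}.
\]
Each $\overline{u_{lk}}=A_{l,k}-iB_{l,k}$ lies in $\Qbar$, because $A_{l,k},B_{l,k}$ and $i$ are algebraic. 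Since $\Qbar$ is a field, $\innerproduct{\bold{u}_j}{\bold{u}_l}\in\Qbar$.

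Next we read off the phase factors from \Cref{eq:normalizedoverlap}. For $j=l$ the phase factor is $e^{i\theta_{jj}}=\innerproduct{\bold{u}_j}{\bold{u}_j}=1$. For $j\neq l$ we have $e^{i\theta_{jl}}=\sqrt{d+1}\,\innerproduct{\bold{u}_j}{\bold{u}_l}$. Here $\sqrt{d+1}$ is a root of $X^2-(d+1)\in\Q[X]$, so it is algebraic. The product of two elements of $\Qbar$ lies in $\Qbar$, so $e^{i\theta_{jl}}$ is algebraic.

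Part 2 runs the same way, with \Cref{thm:wh} in place of \Cref{thm:existance_of_real_alg_lines_in_Cd}. It gives a fiducial vector $\bold{v}$ with coefficients $v_j\in\Qbar$. The normalized overlaps are $\sqrt{d+1}\,\innerproduct{\bold{v}_{a,b}}{\bold{v}}$, where
\[
\innerproduct{\bold{v}_{a,b}}{\bold{v}}=\sum_{j}\omega_d^{bj}v_j\overline{v_{j-a}}.
\]
Each term is algebraic: $\omega_d$ is a root of $X^d-1$, and $\overline{v_{j-a}}\in\Qbar$ by the same real/imaginary part argument as in part 1. Hence every normalized overlap lies in $\Qbar$.

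The only point needing care is that complex conjugation preserves algebraicity. The direct reason is that if $P\in\Q[X]$ has $P(z)=0$, then $P(\bar z)=\overline{P(z)}=0$. This can also be seen from the decomposition into $A_{j,k},B_{j,k}\in\R\cap\Qbar$ supplied by \Cref{theor:real}.
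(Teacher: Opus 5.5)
Your proposal is correct and follows the paper's route. Both arguments use \Cref{thm:existance_of_real_alg_lines_in_Cd} and \Cref{thm:wh} to get algebraic coefficients, and then observe that the inner products are algebraic. You additionally spell out the closure facts the paper leaves implicit: complex conjugation preserves algebraicity, $\omega_d$ is algebraic, and the phase factor is $\sqrt{d+1}\,\innerproduct{\bold{u}_j}{\bold{u}_l}$ with $\sqrt{d+1}$ algebraic.
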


\begin{proof}
This is a direct consequence of  \Cref{thm:existance_of_real_alg_lines_in_Cd} and \Cref{thm:wh} and of the fact that in both cases, $\langle \bold{u}_j,\bold{u}_l\rangle$ is algebraic if the coefficients of $\bold{u}_j,\bold{u}_l$ are.
\end{proof}

This is a step towards answering \Cref{conj:overlap}, since for the normalized overlap $e^{i\theta_{jl}}$ to be an algebraic unit, it first needs to be an algebraic number.

The question of whether investigating $\theta_{jl}$ will help answer the conjecture is natural, however the following corollary shows that we will be looking for transcendental $\theta_{ij}$ which is not promising.
\begin{corollary}
\label[corollary]{cor:eitheta_alg_implies_theta_transcendental}
If $e^{i\theta_{jl}}$ is algebraic where $\theta_{jl}$ is not zero, then $\theta_{jl}$ is transcendental.   
\end{corollary}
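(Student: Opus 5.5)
The plan is to apply the Lindemann--Weierstrass theorem, in its classical special form: if $\beta \neq 0$ is algebraic, then $e^{\beta}$ is transcendental. This is the only external input needed. The corollary is then its contrapositive applied to $\beta = i\theta_{jl}$.

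First, I note that $\theta_{jl}$ is a real number, since $e^{i\theta_{jl}}$ is a phase factor of modulus one as in \Cref{eq:normalizedoverlap}. Suppose, for contradiction, that $\theta_{jl} \neq 0$ and $\theta_{jl}$ is algebraic. Since $i$ is algebraic (a root of $X^2+1$) and $\Qbar$ is a field, $\beta = i\theta_{jl}$ is algebraic. It is also nonzero, because $\theta_{jl} \neq 0$. Lindemann--Weierstrass then says $e^{\beta} = e^{i\theta_{jl}}$ is transcendental, which contradicts the hypothesis that $e^{i\theta_{jl}}$ is algebraic. Hence $\theta_{jl}$ must be transcendental.

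The only delicate point is the normalization of $\theta_{jl}$. It is defined only modulo $2\pi$, and the representative $\theta_{jl} = 0$ (more generally $\theta_{jl} \in 2\pi\Z$) must be excluded. That case corresponds to $e^{i\theta_{jl}} = 1$, where the exponent $0$ is algebraic and no contradiction arises. The hypothesis ``$\theta_{jl}$ not zero'' handles this for the chosen representative. I will also remark that every other representative $\theta_{jl} + 2\pi m$ with $m \in \Z$ gives the same exponential, and is nonzero whenever $e^{i\theta_{jl}} \neq 1$ or $m \neq 0$, so the argument applies to it verbatim. So the statement holds for every nonzero choice of angle.

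Apart from this bookkeeping there is no real obstacle, since the whole content lies in the cited transcendence theorem. I will close by combining the corollary with \Cref{prop:existance_of_alg_phase_factors}: for algebraic constructions, every nonzero angle $\theta_{jl}$ is transcendental.
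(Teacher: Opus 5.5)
Your argument is correct and is the same as the paper's: the paper cites the general Lindemann--Weierstrass theorem, and you use exactly the $n=1$ case it needs (a nonzero algebraic $\beta=i\theta_{jl}$ is $\Q$-linearly independent, so $e^{\beta}$ is transcendental). One harmless slip: your side remark that $\theta_{jl}+2\pi m$ is nonzero whenever $m\neq 0$ fails for, e.g., $\theta_{jl}=2\pi$, $m=-1$, but the remark is unnecessary because your main argument already applies directly to any nonzero representative.
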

\begin{proof}
This is a direct consequence of Lindemann–Weierstrass theorem which states that if $\alpha_1,\ldots,\alpha_n$ are algebraic numbers that are linearly independent over $\mathbb{Q}$, then $\mathbb{Q}(e^{\alpha_1},\ldots,e^{\alpha_n})$ has transcendence degree $n$ over $\Q$.     
\end{proof}    

We have shown that the existence of a Weyl-Heisenberg vector in $\C^d$ implies the existence of normalized overlaps which are algebraic numbers. To prove \Cref{conj:overlap}, the next two steps would be to show that they are algebraic integers, and finally that they are algebraic units. 
We prove the last step with the following proposition and corollary.
\begin{proposition}(\cite{keithconradrootsonacircle})
\label[proposition]{prop:irred_f_with_root_on_unit_circle_cond}
    Let $f(x) \in \Q[x]$ be irreducible with degree $n > 1$. If $f(x)$ has a root on the unit circle then $n$ is even and $x^nf(\frac{1}{x}) = f(x)$, that is $f$ is reciprocal, and hence its leading coefficient is equal to its constant term, $f(0)$.
\end{proposition}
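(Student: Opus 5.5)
Let $\alpha$ be a root of $f$ with $|\alpha|=1$. The plan is to exploit the identity $\overline{\alpha}=1/\alpha$ on the unit circle and compare $f$ with its reciprocal polynomial $f^*(x):=x^nf(1/x)$.

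First, since $f$ has rational, hence real, coefficients, $\overline{\alpha}$ is also a root of $f$. Because $|\alpha|=1$ we have $\overline{\alpha}=1/\alpha$, so $1/\alpha$ is a root of $f$. Note that $\alpha\neq 0$, and $f(0)\neq 0$, since $f$ is irreducible of degree $n>1$ and so is not divisible by $x$.

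Next, consider $f^*(x)=x^nf(1/x)$. This is a polynomial of degree exactly $n$ (its leading coefficient is $f(0)\neq 0$) with rational coefficients. It is irreducible: a factorization $f^*=gh$ would give $f=g^*h^*$ up to powers of $x$, and we can check degrees using $f(0)\neq0$ and $f^*(0)=LC(f)\neq 0$. Moreover, $f^*(\alpha)=\alpha^n f(1/\alpha)=0$. Hence $f$ and $f^*$ are both irreducible over $\Q$ of degree $n$ and share the root $\alpha$, so both are rational multiples of the minimal polynomial of $\alpha$. This gives $f^*=cf$ for some $c\in\Q^\times$. Applying the operation $*$ twice returns $f$, so $f=(f^*)^*=c f^*=c^2 f$, which forces $c=\pm1$.

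The main obstacle is ruling out $c=-1$, since the statement claims $f^*=f$ exactly. Suppose $f^*=-f$. Then $f(1)=f^*(1)=-f(1)$, so $f(1)=0$. Irreducibility with $n>1$ forbids a rational root, a contradiction. Hence $c=1$, that is, $x^nf(1/x)=f(x)$. Comparing the coefficient of $x^n$ on both sides gives $f(0)=LC(f)$, which is the last claim.

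Finally, we show that $n$ is even. Since $f^*=f$, the roots of $f$ are closed under $\beta\mapsto 1/\beta$. The polynomial $f$ is irreducible over $\Q$, so it is separable and its roots are distinct. A fixed point of the involution satisfies $\beta=\pm1$, and these are rational, so they cannot be roots of $f$ when $n>1$. Thus the involution on the roots has no fixed points, the $n$ roots split into pairs $\{\beta,1/\beta\}$, and $n$ is even.
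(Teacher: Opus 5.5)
Your proof is correct and follows essentially the same route as the paper: compare $f$ with its reciprocal $x^nf(1/x)$, deduce $x^nf(1/x)=cf(x)$ from the shared root $\alpha$, and force $c=1$ using $f(1)\neq 0$. The differences are minor. The paper gets $c=1$ directly from $f(1)=cf(1)$, so it needs neither your $c^2=1$ step nor the irreducibility of $f^*$. It also proves $n$ even by evaluating at $x=-1$, getting $f(-1)=(-1)^nf(-1)$ with $f(-1)\neq 0$, whereas you pair the distinct roots under $\beta\mapsto 1/\beta$.
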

\begin{proof}
    Let $\alpha$ be a root of $f(x)$ with $|\alpha| = 1$. Since $f$ has real coefficients, $\overline{\alpha} = \alpha^{-1} = \frac{1}{\alpha}$ is also a root of $f(x)$. Now, observe that the polynomial $g(x) = x^nf(\frac{1}{x})$ is also a degree $n$ polynomial in $\Q[x]$; furthermore it has leading coefficient $f(0)$. Now, $g(\alpha) = \alpha^n f(\frac{1}{\alpha}) = 0$, hence since $f$ is irreducible, $g$ must be a constant multiple of $f$, that is $g(x) = cf(x)$ for some $c \neq 0 \in \Q$. Evaluating at $x = 1$ gives $g(1) = f(1) = cf(1)$, and since $f(\pm 1) \neq 0$ (otherwise by irreducibility $f(x) = x \pm 1$, which contradicts $n > 1$), it must be that $c = 1$ and hence, $f(x) = x^nf(\frac{1}{x})$. Now, evaluating at $x = -1$ we have $f(-1) = (-1)^n f(-1)$ which implies that $n$ is even. Lastly, since $f(x) = x^nf(1/x)$ this implies that the coefficient of $x^n$ and $x^0$, which is $f(0)$, are equal.
\end{proof}
\begin{corollary}
\label[corollary]{cor:phase_alg_int_implies_alg_unit}
    If a phase factor is an algebraic integer, then it is an algebraic unit.    
\end{corollary}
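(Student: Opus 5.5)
The plan is to reduce the claim to \Cref{prop:irred_f_with_root_on_unit_circle_cond}. Let $\zeta = e^{i\theta_{jl}}$ be a phase factor, so $|\zeta| = 1$, and assume $\zeta$ is an algebraic integer. Let $f \in \Q[x]$ be the minimal polynomial of $\zeta$, taken monic. Since $\zeta$ is an algebraic integer, $f \in \Z[x]$ by Gauss's lemma, with leading coefficient $1$. Recall that $\zeta$ is a unit exactly when $\zeta^{-1}$ is also an algebraic integer, and this holds iff the constant term $f(0)$ equals $\pm 1$. Indeed, the minimal polynomial of $\zeta^{-1}$ is $x^n f(1/x)/f(0)$, and it has integer coefficients iff $f(0) = \pm 1$.

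First I dispose of the degenerate case $n = \deg f = 1$. Then $\zeta \in \Q$ and $|\zeta| = 1$, so $\zeta = \pm 1$, which is clearly a unit. Now assume $n > 1$. The polynomial $f$ is irreducible over $\Q$ of degree $n > 1$ and has the root $\zeta$ on the unit circle. By \Cref{prop:irred_f_with_root_on_unit_circle_cond}, $f$ is reciprocal, and its leading coefficient equals its constant term. So $f(0) = 1$, and $x^n f(1/x) = f(x)$ is monic with integer coefficients.

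Since $\zeta^{-1}$ is a root of $x^n f(1/x) = f(x)$, it is also an algebraic integer. More concretely, $\zeta^{-1} = \overline{\zeta}$ is another root of $f$. Hence $\zeta$ is an algebraic unit. Equivalently, the product of all roots of $f$ is $(-1)^n f(0) = 1$, and this exhibits $\zeta^{-1}$ as a product of algebraic integers (the other conjugates).

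There is no serious obstacle. The only points needing care are invoking Gauss's lemma so that $f$ has integer coefficients rather than merely rational ones, and treating the $n = 1$ case separately, since \Cref{prop:irred_f_with_root_on_unit_circle_cond} requires $n > 1$.
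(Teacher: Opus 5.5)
Your proof is correct and follows essentially the same route as the paper: you dispose of $\zeta=\pm 1$ (your degree-one case), then apply \Cref{prop:irred_f_with_root_on_unit_circle_cond} to the monic integer minimal polynomial $f$, so that $\zeta^{-1}$ is a root of the monic integer polynomial $x^nf(1/x)=f(x)$. Your side remark that $\zeta^{-1}=\overline{\zeta}$ is already a root of $f$ (since $f$ has real coefficients) settles the matter on its own, without needing the proposition at all.
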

\begin{proof}
    Let $\alpha = e^{i\theta}$ be a phase factor, hence $|\alpha| = 1$, and an algebraic integer. If $\alpha = \pm 1$ then it is clearly an algebraic unit, so assume not. Using \Cref{prop:irred_f_with_root_on_unit_circle_cond} with $f(x) = m_{\alpha, \Q}(x) \in \Z[x]$, the minimal polynomial of $\alpha$ over $\mathbb{Q}$,  immediately implies that $\frac{1}{\alpha}$ is a root of $x^nf(\frac{1}{x}) = f(x)$ which is a monic polynomial with integer coefficients and hence $\alpha^{-1} = \frac{1}{\alpha}$ is an algebraic integer, which makes $\alpha$ an algebraic unit.
\end{proof}

Hence, providing a partial conclusion to \Cref{conj:overlap} reduces to proving that normalized overlaps are algebraic integers given that they are algebraic numbers.

\subsection{About algebraic Weyl-Heisenberg SIC-POVMs}

In the literature, Weyl-Heisenberg SIC-POVMs are classified by their $PEC(d)$ orbits (See \cite{Appleby_2005_clifford_group} for an in-depth analysis of $PEC(d)$ orbits) following the notation of Scott and Grassl \cite[Table I]{scott_grassl_2009}. Using our \Cref{thm:existance_of_real_alg_lines_in_Cd} we can claim that if a SIC-POVM exists, then every fiducial vector is algebraic in at least one of the $PEC(d)$ orbits.

\begin{proposition}
\label[proposition]{prop:existance_of_alg_pecd_orbit}
    If there exists a construction of $d^2$ equiangular lines in $\C^d$, there exists a $PEC(d)$ orbit for which every fiducial vector is algebraic over $\Q$.
\end{proposition}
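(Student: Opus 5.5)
The plan is to combine \Cref{thm:wh} with the fact that the extended Clifford group acts through matrices with algebraic entries. Membership in a $PEC(d)$ orbit only makes sense for Weyl-Heisenberg covariant SIC-POVMs, so I will read the hypothesis as the existence of a Weyl-Heisenberg covariant construction, that is, of a Weyl-Heisenberg fiducial vector. This is the setting of \cite{scott_grassl_2009}. If the intended hypothesis really is an arbitrary set of $d^2$ equiangular lines, then one first needs such lines to be Weyl-Heisenberg covariant. That step is not supplied by anything earlier in the paper, and I flag it as a genuine gap rather than something I can fill.

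\textbf{Step 1.} Apply \Cref{thm:wh} to get a fiducial vector $\bold{v}\in\Qbar^d$. This fixes a specific $PEC(d)$ orbit, namely the orbit of $\bold{v}$.

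\textbf{Step 2.} Show that every other fiducial vector in that orbit is algebraic. By \cite{Appleby_2005_clifford_group}, the orbit is obtained from $\bold{v}$ by applying elements of the extended Clifford group, possibly composed with complex conjugation, and then taking vectors up to phase. The unitary Clifford group is generated by $\bold{U}$, $\bold{V}$, the Fourier matrix $F=\tfrac{1}{\sqrt d}(\omega_d^{jk})_{j,k}$, and the phase matrix $\mathrm{diag}(\tau^{j(j+d)})$ with $\tau=-e^{i\pi/d}$. All of these have entries in $\Q(\omega_{2d},\sqrt d)\subseteq\Qbar$. Hence every product of generators has algebraic entries. Applying such a matrix to $\bold{v}$ gives a vector with algebraic entries, because $\Qbar$ is a field. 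The antiunitary part is entrywise complex conjugation, and $\Qbar$ is closed under conjugation, so it also preserves algebraicity.

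\textbf{Step 3.} Handle the phase ambiguity, which I expect to be the main obstacle. A fiducial ``vector'' is only defined up to a global phase $e^{i\phi}$, and for transcendental $\phi$ the vector $e^{i\phi}\bold{v}$ is not algebraic. So the statement has to be read projectively. One option is to take the representative normalized as in the $d=4$ example, with the first nonzero coordinate real and positive. If $\bold{w}$ is algebraic and $w_1\neq 0$, then $|w_1|=\sqrt{w_1\overline{w_1}}$ is algebraic, so $\tfrac{\overline{w_1}}{|w_1|}\bold{w}$ is algebraic. The other option is to work with the projector $\bold{w}\bold{w}^*$, whose entries are automatically algebraic. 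With either convention, every element of the orbit has an algebraic representative, which is the claim.
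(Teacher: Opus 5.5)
Your proof is correct and follows the paper's argument: you take an algebraic fiducial vector and use the fact that representatives of elements of $PEC(d)$ have entries in a quadratic extension of a cyclotomic field, so the whole orbit stays algebraic. You are more careful than the paper in two places: the paper cites \Cref{thm:existance_of_real_alg_lines_in_Cd}, which yields algebraic equiangular lines but not a Weyl--Heisenberg fiducial, so the gap you flag in the literal hypothesis is also present there and \Cref{thm:wh} is the right citation; and the paper leaves implicit the up-to-phase reading that your Step~3 makes explicit.
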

\begin{proof}
    Suppose a SIC-POVM exists in $\C^d$, then \Cref{thm:existance_of_real_alg_lines_in_Cd} ensures that an algebraic fiducial vector exists in $\C^d$, say $v$, and hence, since every element of $PEC(d)$ is a matrix with algebraic entries lying in a quadratic extension of a cyclotomic field, every vector in the $PEC(d)$ orbit of $v$, will also be algebraic over $\Q$.
\end{proof}
\begin{corollary}
\label[corollary]{cor:pecd_orbit_is_alg}
    For $d = 4, 5, 6, 10, 22$ every Weyl-Heisenberg fiducial vector is algebraic over $\Q$.
\end{corollary}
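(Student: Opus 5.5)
The corollary should follow from the zero-dimensionality mechanism of \Cref{theorem:finite_variety_implies_algebraic}, combined with the known classification of Weyl-Heisenberg fiducial vectors in these dimensions. The key input is that for $d=4,5,6,10,22$ there is only one $PEC(d)$ orbit of Weyl-Heisenberg fiducial vectors; this is the situation recorded in the tables of \cite{scott_grassl_2009}. For $d=4$ it is confirmed directly by the computation in the example above, which yields exactly $16$ fiducial vectors, all in one orbit.

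The argument then has three steps.

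\begin{enumerate}
\item Invoke \Cref{thm:wh}, or \Cref{prop:existance_of_alg_pecd_orbit}. Existence of SIC-POVMs in these dimensions is established by the exact constructions cited in the introduction, so we obtain a $PEC(d)$ orbit in which every fiducial vector is algebraic over $\Q$.
\item Since there is only one orbit, this orbit must contain every Weyl-Heisenberg fiducial vector (up to the phase normalization).
\item Restore arbitrary phases. If $\bold v$ is a fiducial vector, so is $e^{i\phi}\bold v$ for any real $\phi$, and for transcendental $e^{i\phi}$ this vector is not algebraic. So the statement must be read up to a global phase, in the same normalization as the example (first coordinate real). With that normalization, the orbit representatives are algebraic, and the phase-fixed members are algebraic as well, because the $PEC(d)$ matrices have entries in a quadratic extension of a cyclotomic field and renormalizing the first coordinate to be real multiplies by an algebraic phase.
\end{enumerate}

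A more self-contained alternative for $d=4$ bypasses the classification. The ideal in \Cref{eq:ideal_gen_by_wh_zauner} has a finite complex variety (1024 points), so \Cref{theorem:finite_variety_implies_algebraic} immediately makes every phase-normalized fiducial vector algebraic. For the other listed $d$ one would attempt the same zero-dimensionality check, which is what \Cref{conj:wh} predicts for $d>3$.

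The main obstacle is step 2. It relies on the external classification that these dimensions have a single orbit. Unless the zero-dimensionality is verified by a Gröbner basis computation, this is numerical evidence rather than a proof that no other fiducials exist. I would state this dependence explicitly and cite \cite{scott_grassl_2009} for the orbit counts, treating $d=4$ as the fully verified case via the example.
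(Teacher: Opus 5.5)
Your argument is essentially the paper's: the paper's proof combines \Cref{prop:existance_of_alg_pecd_orbit} with the single-$PEC(d)$-orbit count from \cite{scott_grassl_2009}, which is your steps 1 and 2. Routing step 1 through \Cref{thm:wh} is the more accurate citation, since you need an algebraic Weyl-Heisenberg fiducial and not merely algebraic equiangular lines. Your step 3 is a correct refinement that the paper leaves implicit: $e^{i\phi}\bold{v}$ is again a fiducial vector, so the statement can only hold up to a global phase (equivalently, for the lines rather than the vectors).
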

\begin{proof}
    In these dimensions, there is only one $PEC(d)$ orbit \cite[Table I]{scott_grassl_2009}.
\end{proof}

This is a step towards answering the long standing question on why all reported Weyl-Heisenberg fiducials are algebraic over $\Q$.

We also note that combining  \Cref{theorem:finite_variety_implies_algebraic} with \Cref{conj:wh} would answer the long standing question.

\begin{theorem}
    For $d > 3$, Weyl-Heisenberg covariant SIC-POVMs are always algebraic over $\Q$.
\end{theorem}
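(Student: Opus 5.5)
This statement is conditional: it is the combination of \Cref{theorem:finite_variety_implies_algebraic} with \Cref{conj:wh}, so I would prove it assuming \Cref{conj:wh}. That conjecture says that for $d>3$ the polynomial system defining Weyl-Heisenberg covariant SIC-POVMs cuts out a zero-dimensional variety, once the trivial continuous symmetry is removed. The plan is to read that assumption as finiteness of $\V_\C(I)$ for a suitably normalized ideal $I$ over $k=\Q(\cos(\tfrac{2\pi}{d}),\sin(\tfrac{2\pi}{d}))$. We then invoke \Cref{theorem:finite_variety_implies_algebraic} to get that every solution is algebraic over $k$, and finally pass from $k$ to $\Q$.

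\textbf{Step 1: set up the ideal.} Start from the real system \Cref{eq:fid} in the $2d$ real variables $A_0,\ldots,A_{d-1},B_0,\ldots,B_{d-1}$, whose coefficients lie in $k$. The overall phase $\bold{v}\mapsto e^{i\phi}\bold{v}$ gives a one-parameter family of solutions, so, as in the $d=4$ example, we kill it with a normalization. Given any fiducial $\bold{v}$, some coordinate $v_m$ is nonzero. Multiply $\bold{v}$ by $\overline{v_m}/|v_m|$; this is still a fiducial and has $B_m=0$. For each $m$, let $I^{(m)}\subseteq k[A,B]$ be the ideal generated by the $p_{a,b}-c_{a,b}$ together with $B_m$. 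Every fiducial is a phase multiple of a real point of some $\V_\C(I^{(m)})$.

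\textbf{Step 2: apply the two results.} By \Cref{conj:wh} (this is exactly how the conjecture is used numerically), each $\V_\C(I^{(m)})$ is finite. By \Cref{lemma:VkK}, $\V_\C(I^{(m)}_k)=\V_\C(I^{(m)}_\C)$, so \Cref{theorem:finite_variety_implies_algebraic} applies with $K=\C$. Every point of $\V_\C(I^{(m)})$, in particular every real point, has coordinates in $\overline{k}$. Since $\cos(\tfrac{2\pi}{d})$ and $\sin(\tfrac{2\pi}{d})$ are algebraic (as in the proof of \Cref{thm:wh}), $k$ is a number field and $\overline{k}=\Qbar$. Hence the normalized fiducial has $v_j=A_j+iB_j\in\Qbar$. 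The whole SIC-POVM $\{\bold{V}^a\bold{U}^b\bold{v}\}$ is then algebraic, because $\bold{U},\bold{V}$ have entries in $\Q(\omega_d)$.

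\textbf{Main obstacle.} The main difficulty is interpretive: making precise what ``the polynomial equations define a variety of dimension zero'' means, given the unavoidable phase freedom. An arbitrary fiducial, say $e^{i\phi}\bold{v}$ with $\phi$ transcendental, is not algebraic. So the statement must be read projectively, that is, up to a global phase, which is the convention already implicit in the $d=4$ example with $p_0=x_4$. I would state explicitly that ``algebraic'' means algebraic up to a global phase. I would also assume that \Cref{conj:wh} is formulated for the phase-fixed ideal, covering each choice of normalized coordinate $m$, rather than only for $B_0=0$: fiducials with $v_0=0$ would be missed by that single normalization, and this is why the union over $m$ is needed.
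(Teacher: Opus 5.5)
Your proposal is correct and follows the paper's own argument: assume \Cref{conj:wh} for the phase-fixed system over $k=\Q(\cos(\tfrac{2\pi}{d}),\sin(\tfrac{2\pi}{d}))$, read dimension zero as finiteness of the complex variety, and apply \Cref{theorem:finite_variety_implies_algebraic} to land in $\overline{k}=\Qbar$; your handling of the global phase and of the orbit $\bold{V}^a\bold{U}^b\bold{v}$ only spells out what the paper's one-line proof leaves implicit. Two small slips in your closing remark. First, the union over $m$ is unnecessary: a fiducial with $v_0=0$ is not missed by the normalization $B_0=0$, since its whole phase circle $e^{i\phi}\bold{v}$ would lie inside $\V_{\C}(I^{(0)})$, so finiteness of that one variety already excludes it. Second, your non-algebraic example needs $e^{i\phi}$, not $\phi$, to be transcendental (e.g.\ $\phi=\pi/3$ is transcendental yet $e^{i\phi}\in\Qbar$).
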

\begin{proof}
    Assuming \Cref{conj:wh}, we have that the polynomials defining a Weyl-Heisenberg SIC-POVM, which we have shown in \Cref{subsec:poly} are defined over $k = \Q(\cos(\tfrac{2\pi}{d}), \sin(\tfrac{2\pi}{d}))$, together with polynomial fixing the phase, define an algebraic variety of dimension zero, that is the variety is finite, and hence by \Cref{theorem:finite_variety_implies_algebraic} the variety is contained in $\overline{k} \subseteq \Qbar$ as desired.
\end{proof}

\subsection{About real equiangular lines}

\Cref{theorem:existance_of_real_alg_sic_in_Rd} implies the existence of algebraic equiangular lines for the real case. 
Let us look at some simple constructions of real equiangular lines (see \cite{tremain2008concreteconstructionsrealequiangular} for a variety of constructions). 
\begin{example}[$d = 2$]
    Take the set $\{\begin{pmatrix}
        1 \\
        0
    \end{pmatrix}, \begin{pmatrix}
        \frac{1}{2} \\
        \frac{\sqrt{3}}{2}
    \end{pmatrix}, \begin{pmatrix}
        \frac{1}{2} \\
        -\frac{\sqrt{3}}{2}
    \end{pmatrix}\}$. This is a set of equiangular lines where $\alpha = \frac{1}{2}$; one can think of these lines as the diagonals of a regular hexagon on a flat surface. 
    Coefficients live in the number field $\Q(\sqrt{3})$.
\end{example}

\begin{example}[$d = 3$]
    The rows of the following matrix form a maximal set of real equiangular lines in $\R^3$ with angle $\alpha = \frac{1}{\sqrt{5}}$:
    \begin{equation*}
        \begin{pmatrix}
            0 & \sqrt{\frac{5 - \sqrt{5}}{10}} & \sqrt{\frac{5 + \sqrt{5}}{10}} \\
            0 & -\sqrt{\frac{5 - \sqrt{5}}{10}} & \sqrt{\frac{5 + \sqrt{5}}{10}} \\
            \sqrt{\frac{5 - \sqrt{5}}{10}} & \sqrt{\frac{5 + \sqrt{5}}{10}} & 0 \\
            -\sqrt{\frac{5 - \sqrt{5}}{10}} & \sqrt{\frac{5 + \sqrt{5}}{10}} & 0 \\
            \sqrt{\frac{5 + \sqrt{5}}{10}} & 0 & \sqrt{\frac{5 - \sqrt{5}}{10}} \\
            \sqrt{\frac{5 + \sqrt{5}}{10}} & 0 & -\sqrt{\frac{5 - \sqrt{5}}{10}}
        \end{pmatrix}.
    \end{equation*}
The coefficients live in a quadratic extension of $\Q(\sqrt{5})$.
\end{example}

The two examples illustrate solutions with coefficients in a number field. In the real case, it can in fact be shown that every set of real equiangular lines must have real algebraic coefficients, up to an orthogonal transformation.

\begin{theorem}
    Given a set of real equiangular lines in $\R^d$, represented by unit vectors, $\{\bold{u}_1, \ldots, \bold{u}_N\}$, there exists an orthogonal matrix, $O$, such that $O\bold{u}_j \in (\R \cap \Qbar)^d$ for $1 \leq j \leq N$.
\end{theorem}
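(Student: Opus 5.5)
The plan is to work entirely through the Gram matrix. Let $G = (\innerproduct{\bold{u}_j}{\bold{u}_l})_{j,l}$ be the $N\times N$ Gram matrix of the given unit vectors. Its diagonal entries are $1$ and its off-diagonal entries are $\pm\alpha$. By \Cref{lemma:alpha_in_R_union_Qbar} we know $\alpha \in \R\cap\Qbar$, so every entry of $G$ lies in the real closed field $F = \R\cap\Qbar$.

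The key observation is that a configuration of vectors is determined up to an orthogonal transformation by its Gram matrix. It therefore suffices to produce vectors $\bold{w}_1,\ldots,\bold{w}_N \in F^d$ with the same Gram matrix $G$. One could then try to find an orthogonal $O$ with $O\bold{u}_j=\bold{w}_j$ for all $j$.

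To construct the $\bold{w}_j$, I would apply a Cholesky/$LDL^T$-type factorization over $F$. Since $G$ is real symmetric positive semidefinite of rank $r\le d$ with entries in $F$, pick a maximal linearly independent subset, say $\bold{u}_1,\ldots,\bold{u}_r$ after reordering. The corresponding $r\times r$ principal minor $G_r$ is positive definite.

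Gram--Schmidt on $\bold{u}_1,\ldots,\bold{u}_r$ expresses everything in terms of inner products. The coordinates of $\bold{u}_j$ in the resulting orthonormal basis $e_1,\ldots,e_r$ are obtained from the entries of $G$ by field operations and square roots of positive elements, namely the norms of the Gram--Schmidt residuals. Since $F$ is real closed, square roots of positive elements of $F$ stay in $F$. So the Cholesky factor $L$ of $G_r$ has entries in $F$.

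For $j>r$, the vector $\bold{u}_j$ lies in the span, and its coordinates are $L^{-1}$ applied to the column $(\innerproduct{\bold{u}_j}{\bold{u}_i})_{i\le r}$. These coordinates are therefore also in $F$.

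Now complete $e_1,\ldots,e_r$ to an orthonormal basis $e_1,\ldots,e_d$ of $\R^d$, and let $O$ be the orthogonal matrix sending $e_i$ to the $i$-th standard basis vector. Then $O\bold{u}_j$ is the coordinate vector of $\bold{u}_j$ in the basis $(e_i)$, padded with zeros in positions $r+1,\ldots,d$. Hence $O\bold{u}_j\in F^d = (\R\cap\Qbar)^d$.

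An alternative, more in the spirit of the paper, would be to set up polynomial equations for the $\bold{w}_j$ with prescribed inner products over $F$ and apply \Cref{theor:real}. This gives existence of algebraic vectors with Gram matrix $G$. One would then still need the standard fact that equal Gram matrices imply an orthogonal transformation between the configurations, which requires extending a partial isometry on the span. The direct Gram--Schmidt route avoids that step, so I would use it.

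The main obstacle is checking that the Gram--Schmidt quantities really stay in $F$. Each step involves dividing by the norms $\|\tilde e_k\|$, and $\|\tilde e_k\|^2$ is a ratio of consecutive leading principal minors of $G_r$. These minors are positive, because $G_r$ is positive definite, and they lie in $F$. Their square roots are therefore in $F$ because $F$ is real closed. The rest is bookkeeping.
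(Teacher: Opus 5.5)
Your argument is correct, and it takes a different route from the paper's proof.

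\textbf{The paper's route.} The paper diagonalizes $G$. Its eigenvalues are real algebraic as roots of $\chi_G\in\Q(\alpha)[\lambda]$. An orthonormal eigenbasis over $\R\cap\Qbar$ comes from Gaussian elimination and Gram--Schmidt inside each eigenspace. It then writes $G=V^TV$ with $V=\sqrt{\Lambda}Q^T$, and simply asserts that the columns of $V$ are the $\bold{u}_j$ up to an orthogonal map.

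\textbf{Your route.} You Cholesky-factor a positive definite principal block $G_r$. You then build $O$ explicitly from the Gram--Schmidt basis $e_1,\ldots,e_r$ of $\Span\{\bold{u}_j\}$, completed to an orthonormal basis of $\R^d$. This buys three things.
\begin{enumerate}
\item You never need eigenvalues, only square roots of the positive numbers $\delta_k=\det G_k/\det G_{k-1}\in\Q(\alpha)$. The unnormalized Gram--Schmidt vectors are $\Q(\alpha)$-combinations of the $\bold{u}_i$. So your route in fact shows the coordinates lie in the multiquadratic extension $\Q(\alpha)(\sqrt{\delta_1},\ldots,\sqrt{\delta_r})$, which is sharper than ``real algebraic''.
\item You avoid the fact, used without proof in the paper, that equal Gram matrices imply orthogonal equivalence. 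That fact requires extending an isometry between spans to all of $\R^d$.
\item You land directly in $(\R\cap\Qbar)^d$ by zero padding. The paper's $V$ is $N\times N$, so one must still discard the $N-r$ zero rows coming from the zero eigenvalues.
\end{enumerate}
The paper's spectral route is basis-free and symmetric in the vectors, with no choice of an independent subset. However, it uses more of the real closedness of $\R\cap\Qbar$ and leaves these steps implicit.

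\textbf{A hypothesis you should state.} Your proof relies, as the paper's does, on \Cref{lemma:alpha_in_R_union_Qbar}. That lemma is only proved for $N>d$, since this is what forces $\det G=0$. The hypothesis cannot be dropped. Two unit vectors in $\R^2$ with inner product $1/\pi$ form an equiangular set, and no orthogonal $O$ can map them into $(\R\cap\Qbar)^2$, because $O$ preserves their inner product. So you should state $N>d$ explicitly, or directly assume $\alpha\in\Qbar$, which is all your argument actually uses.
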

\begin{proof}
    Let $\{\bold{u}_1, \ldots, \bold{u}_N\}$ be a set of equiangular lines where $d < N \leq \binom{d + 1}{2}$ and $\innerproduct{\bold{u}_j}{\bold{u}_l} = \pm \alpha$. Now construct the Gram matrix
    $$
    G = \begin{pmatrix}
    \bold{u}_1^T \\
    \vdots \\
    \bold{u}_N^T
    \end{pmatrix}
    \begin{pmatrix}
    \bold{u}_1 & \ldots & \bold{u}_N
    \end{pmatrix}
    $$ 
    using this set.

    The characteristic polynomial of $G$ is given by $\chi_G(\lambda) = \det(G - \lambda I_N) \in \Q(\alpha)[\lambda]$, and since $\alpha \in \R \cap \Qbar$ by \Cref{lemma:alpha_in_R_union_Qbar}, we have that $\lambda \in \Qbar$. Since $G$ is a positive semi-definite matrix, the eigenvalues of $G$ are real so $\lambda \in \R \cap \Qbar$. For each eigenvalue, $G - \lambda I_N$ has real algebraic entries and thus, since Gaussian elimination only performs operations in the same field, the eigenspace $E_{\lambda}$ has a real algebraic basis; furthermore, by applying the Gram-Schmidt process to each $E_{\lambda}$, we get orthonormal basis vectors with real algebraic coefficients, since again throughout the process we may need to extend the field to include all the normalization factors, but these will all be real algebraic.

    Since $G$ is symmetric, it has a spectral decomposition, that is $G = Q^T \Lambda Q$ where $Q$ is a matrix whose columns are the normalized eigenvectors of $G$, which we have shown can be chosen to be real algebraic, and $\Lambda$ is a diagonal matrix where the diagonal entries are the corresponding eigenvalues of $G$. Now let $G = V^TV$ where $V = \sqrt{\Lambda}Q^T$ and we have that the columns of $V$ represent the vectors $\{\bold{u}_1, \ldots, \bold{u}_N\}$ up to an orthogonal mapping, and since we have shown that the entries of $\Lambda$, and hence $\sqrt{\Lambda},$ and $Q$ are real algebraic, the columns of $V$ must be as well.
\end{proof}

\subsection*{Acknowledgments}

The authors would like to thank Markus Grassl for kindly sharing some of his MAGMA code and for fruitful discussions, including the shorter proof of \Cref{theorem:finite_variety_implies_algebraic}.
    
\bibliographystyle{alpha}
\bibliography{references}

\end{document}